%% file: main.tex
\documentclass[a4paper,final]{article}

\usepackage[margin=1.5in]{geometry}

\usepackage[mathscr]{euscript}

\usepackage{enumitem}
\usepackage{pgfplots,tabularx,tikz}
\usepgfplotslibrary{groupplots}
\pgfplotsset{compat=newest}
\usetikzlibrary{plotmarks,quotes,positioning}
\usetikzlibrary{arrows.meta}
\usepgfplotslibrary{patchplots}
\usepackage{siunitx}
\usepackage{arydshln,comment}
\usepackage{xcolor}
\definecolor{mycolor1}{rgb}{1.00000,0.00000,1.00000}%
\definecolor{mycolor2}{rgb}{0.00000,1.00000,1.00000}%
\definecolor{mycolor3}{rgb}{0.3,0.,0.9}%
\usepackage{listings}
\input{ex_shared}
\usepackage{mathtools,booktabs,multirow,amssymb}

\newtheorem{Corollary}[theorem]{Corollary}
\theoremstyle{definition}
\newtheorem{example}{Example}

\DeclarePairedDelimiter{\abs}{\lvert}{\rvert}
\DeclarePairedDelimiter{\norm}{\lVert}{\rVert}

\DeclareMathOperator{\adj}{adj}
\DeclareMathOperator{\offdiag}{offdiag}
\newcommand{\mmatrix}{\texttt{mmatrix}}
\newcommand{\matlab}{Matlab}
\newcommand{\lapack}{Lapack}

\ifpdf
\hypersetup{
	pdftitle={\texttt{mmatrix}: a \matlab{} toolbox for accurate computations on M-matrices using triplets},
	pdfauthor={B. Iannazzo, M. Najafi Kalyani and F. Poloni}
}
\fi

\begin{document}
	
\maketitle


\begin{abstract}
We introduce the \mmatrix{} toolbox, a \matlab{} software package for componentwise accurate computations with M-matrices described through left or right triplet representations. 
The core of the toolbox is a Fortran implementation, in the \lapack{} style, of the unblocked, recursive, and blocked versions of the GTH algorithm and its applications for the accurate computation of the solution of linear systems with M-matrix coefficient and nonnegative right-hand side, the LU factorization of an M-matrix and its inverse. 
These algorithms avoid subtractive cancellation; this property ensures high componentwise accuracy even for ill-conditioned problems. The toolbox contains also accurate algorithms for related problems, such as computing the Schur complement, the singular values, the square root of an M-matrix, and the solution of nonsymmetric algebraic Riccati equations associated with M-matrices.
The \matlab{} interface is based on an object-oriented implementation allowing one to use standard \matlab{} operations on M-matrices with triplet representation.
\end{abstract}

\textbf{Keywords } M-matrix, \mmatrix-toolbox, componentwise error

\textbf{MSC Codes} 65F30, 65Y15, 65F05, 15B48

\section{Introduction}
We introduce the \texttt{mmatrix}-toolbox, a \matlab{} package designed for
computations with M-matrices, available at \url{https://github.com/numpi/mmatrix}.

An M-matrix is a real matrix that can be written as $A = sI - B$, where $B$ is componentwise nonnegative, $I$ is the identity matrix and $s \ge \rho(B)$, where $\rho(B)$ is the spectral radius of $B$. These matrices appear in numerous applications ranging from applied probability to network models and discretization of PDEs.

Under mild assumptions an M-matrix possesses a \emph{triplet representation}, that is made of three objects: the off-diagonal elements of $A$, a positive vector $u$ and a nonnegative vector $v$ such that
\begin{subequations} \label{eqtriplet}
\begin{align} 
    \text{(right triplet)} \quad & Au = v, \label{eqrt} \\
    \text{(left triplet)} \quad  & u^T A = v^T. \label{eqlt}
\end{align}
\end{subequations}
Although this representation is not unique, it allows the design of componentwise accurate algorithms for some typical problems associated with M-matrices such as the solution of a linear system, the inversion, and so on, through the so-called GTH-like (or just GTH) algorithm \cite{Ocinneide,alfa2002accurate}, a subtraction-free variant of Gaussian elimination that exploits the M-matrix sign structure.

In contrast, the customary linear system solvers included in linear algebra libraries such as  \lapack{} do not guarantee componentwise accuracy. In particular, they typically deliver results within the relative norm accuracy predicted by the condition number, and this may lead to large relative errors in small entries.

The triplet representation allows one to get full componentwise accuracy in the linear system solvers, when the right-hand side is a nonnegative vector. From these linear system solvers it is possible to design componentwise accurate algorithms for more complicated problems such as the square root of an M-matrix and the solution of matrix equations associated with M-matrices.

The \mmatrix{} toolbox provides a practical framework to work with M-matrices with triplet representation in \matlab. It introduces a class \mmatrix{} that stores a matrix together with its triplet vectors and overloads standard operations, so that users can work with these objects in a
natural way. Warning messages alert the user when operations that may destroy componentwise accuracy are performed. The core computations are implemented in Fortran using variants of the GTH algorithm. These include unblocked routines, recursive versions, and blocked algorithms based on Level-3 BLAS. The structure follows the design of \lapack{} and algorithms are provided for both right and left triplets.
All routines are integrated into \matlab{} using object-oriented code, so that indexing and basic operations behave as expected.

The remainder of this paper is organized as follows. Section~\ref{sec2} surveys useful theoretical results on M-matrices, triplet representations, the GTH algorithm, including its componentwise accuracy properties, and the problems addressed by the Toolbox. Section~\ref{sec3} presents our Fortran implementations, covering the unblocked, recursive cache-oblivious, and blocked level-3 variants. Section~\ref{sec4} focuses on the \matlab{} side and introduces the \texttt{mmatrix} class. Section~\ref{sec5} reports numerical experiments confirming that our code returns componentwise accurate results in finite arithmetic, and Section~\ref{sec6} draws some conclusions.

\section{Preliminaries}\label{sec2}
In this section, we recall some properties of M-matrices and triplet representations, provide the necessary perturbation results, and discuss the main algorithms implemented in the Toolbox. We denote by $\boldsymbol 1$ and $\boldsymbol 0$ the ones and zero vectors, respectively. We write $A\ge 0$ [$A>0$] for a nonnegative [positive] matrix and analogously $v\ge 0$ [$v>0$] will be a nonnegative [positive] vector.

\subsection{Triplet representations}
We denote by $\offdiag(A)$ a matrix that has the same off-diagonal entries as $A\in\mathbb{R}^{n\times n}$, but its diagonal is replaced by zeros, i.e.,
\[
\offdiag(A)_{ij} = \begin{cases}
    A_{ij}, & i\neq j,\\
    0, & i=j.
\end{cases}
\]
We say that an M-matrix $A$ has a \emph{right triplet representation} $(\offdiag(A), u, v)$ if $u>0$, $v\geq 0$, and $Au=v$. Similarly, $A$ has a \emph{left triplet representation} $(\offdiag(A), u^T, v^T)$ if the last condition is replaced by $u^T A = v^T$.
\begin{example} \label{ex0}
    Let $\delta \geq 0$. The M-matrix $A_\delta=\begin{bsmallmatrix}
        2 & -2\\
        -1 & 1+\delta
    \end{bsmallmatrix}$ has a right triplet representation
    \[
    \left(\begin{bmatrix}
        0 & -2\\
        -1 & 0
    \end{bmatrix}, \begin{bmatrix}
        1\\1
    \end{bmatrix}, 
    \begin{bmatrix}
        0\\
        \delta
    \end{bmatrix}\right).
    \]
    There are different choices in the literature on how to store in a triplet the off-diagonal part of an M-matrix (with the plus sign or the minus sign, as a vector or as a matrix with ``holes'' in the diagonal); in any case, what matters is that the triplet specifies the off-diagonal elements of $A$.
\end{example}

A triplet representation determines uniquely the matrix $A$: indeed, from a right triplet, the diagonal of $A$ can be reconstructed as
\begin{equation}\label{eq:1}
A_{ii} = \frac{v_i-\sum_{j\ne i} A_{ij}u_j}{u_{i}},
\end{equation}
and an analogous formula holds for left triplets. An M-matrix may not possess a right or left triplet representation, or both. For instance, the matrices
\begin{equation}
    A_1 = \begin{bmatrix}
    1 & 0\\ -1 & 0
    \end{bmatrix},\qquad
    A_2 = \begin{bmatrix}
    1 & -1\\0 & 0
    \end{bmatrix},\qquad
    A_3 = \begin{bmatrix}
    1 & -1 & 0\\0 & 0 & 0\\
    -1 & 0 & 0
    \end{bmatrix},
\end{equation}
are M-matrices such that: $A_1$ has a left ($u=\boldsymbol 1$, $v=\boldsymbol 0$) but not a right triplet, $A_2$ has a right ($u=\boldsymbol 1$, $v=\boldsymbol 0$) but not a left triplet, $A_3$ has neither right nor left triplet.

M-matrices with a (right) triplets are called \emph{regular} in~\cite{Guo13regular,Guo16regularsingular}. Necessary and sufficient conditions for the existence of triplets can be obtained from \cite[Theorem 4]{bimm}, stated only for right triplets, but easily extended to left triplets. We recall the \emph{Frobenius normal form}: for every M-matrix $A$ there exists a permutation matrix $P$ such that $P^TAP$ is block upper triangular, i.e.,
\begin{equation} \label{frobenius_normal_form}
P^TAP = \begin{bmatrix}
    A_{11} &  A_{12} & \cdots & A_{1n}\\
    & A_{22} & \ddots & \vdots\\
    & & \ddots & \vdots\\
    & & & A_{\nu\nu},
    \end{bmatrix}    
\end{equation}
and each diagonal block $A_{ii}$ is irreducible.

\begin{theorem}\label{thm:1}
The matrix $A$ has a right [left] triplet if and only if in its Frobenius normal form~\eqref{frobenius_normal_form} for each singular diagonal block $A_{ii}$ (including the $1\times 1$ zero blocks), the blocks on the row [column] block other than $A_{ii}$ are zero, i.e., $A_{ij}=0$ [$A_{ji}=0$] for $j\ne i$.
\end{theorem}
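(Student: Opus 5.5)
We treat right triplets; the left case follows by applying the result to $A^T$, since a left triplet of $A$ is a right triplet of $A^T$ and the Frobenius form of $A^T$ is obtained by transposing and reversing the block order. Since $P^TAP$ has the right triplet $(\offdiag(P^TAP),P^Tu,P^Tv)$, we may assume $A$ is already in the form~\eqref{frobenius_normal_form}. Write $u=(u_1,\dots,u_\nu)$ and $v=(v_1,\dots,v_\nu)$ conformally. The $i$-th block row of $Au=v$ reads
\[
A_{ii}u_i = v_i - \sum_{j>i} A_{ij}u_j .
\]
Off-diagonal blocks of an M-matrix are nonpositive, so the right-hand side is $\ge v_i\ge 0$.

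\emph{Necessity.} Suppose $A_{ii}$ is singular. It is an irreducible singular M-matrix, so by Perron--Frobenius it has a positive left null vector $w$, with $w^TA_{ii}=0$. Multiplying the block equation by $w^T$ gives
\[
0 = w^Tv_i - \sum_{j>i} w^TA_{ij}u_j .
\]
Both terms on the right are nonnegative, so each vanishes. In particular $w^T(-A_{ij})u_j=0$. Since $w>0$, $u_j>0$ and $-A_{ij}\ge 0$, this forces $A_{ij}=0$ for all $j>i$. (We also get $v_i=0$, which is not needed.)

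\emph{Sufficiency.} We construct $u$ block by block, from $i=\nu$ down to $1$, keeping each $u_i>0$ and each $v_i\ge 0$.
\begin{itemize}
\item If $A_{ii}$ is singular, the hypothesis gives $A_{ij}=0$ for $j>i$. Take $u_i$ to be the positive Perron right null vector of $A_{ii}$; for a $1\times 1$ zero block take $u_i=1$. Then $v_i=A_{ii}u_i=0$.
\item If $A_{ii}$ is nonsingular, it is an irreducible nonsingular M-matrix, so $A_{ii}^{-1}>0$ entrywise. Set $r_i=-\sum_{j>i}A_{ij}u_j\ge 0$, which depends only on blocks already chosen, and define $u_i=A_{ii}^{-1}(r_i+\mathbf 1)>0$. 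Then $v_i=A_{ii}u_i+\sum_{j>i}A_{ij}u_j=\mathbf 1\ge 0$.
\end{itemize}
This produces $u>0$ and $v=Au\ge 0$, which is a right triplet.

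The only potential obstacle is the M-matrix facts used for irreducible diagonal blocks: positive Perron null vectors in the singular case and a strictly positive inverse in the nonsingular case. These are standard. The $1\times1$ zero blocks are handled separately as above and fit the same argument.
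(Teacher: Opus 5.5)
Your proof is correct. The paper gives no argument for Theorem~\ref{thm:1}. It only points to \cite[Theorem~4]{bimm} for right triplets and remarks that the extension to left triplets is easy, so your proof is a self-contained replacement for that citation rather than a variant of something in the text.

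\emph{Necessity.} Pairing the $i$th block row of $Au=v$ with the positive left Perron null vector of a singular block gives a sum of nonnegative terms equal to zero. This forces $A_{ij}=0$ for every $j>i$. As a by-product it gives $v_i=0$, consistent with the paper's later remark that an irreducible singular $A$ has $Au=\mathbf{0}$.

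\emph{Sufficiency.} The block upper triangular structure lets you solve from the bottom up. This produces an explicit triplet whose $v$ has blocks equal to $0$ or $\mathbf{1}$, which the bare citation does not give you.

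\emph{Left case.} Your transposition-plus-block-reversal step is exactly the ``easy extension'' the paper leaves to the reader. It works because block $(k,l)$ of the reversed form of $A^T$ is $(A_{\nu+1-l,\nu+1-k})^T$. Hence the row condition for $A^T$ at a singular block is precisely the column condition for $A$. You assert this rather than check it, but it is immediate.

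\emph{A minor remark.} In the nonsingular case you do not need $A_{ii}^{-1}>0$, so irreducibility is not used there. Invertibility and $A_{ii}^{-1}\ge 0$ already give $A_{ii}^{-1}(r_i+\mathbf{1})\ge A_{ii}^{-1}\mathbf{1}>0$, since an invertible nonnegative matrix has no zero row.
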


From Theorem \ref{thm:1} we deduce a corollary on the existence of triplet representations under standard hypotheses.

\begin{Corollary}
If the M-matrix $A$ is nonsingular or singular and irreducible then it has a left and a right triplet representation.
\end{Corollary}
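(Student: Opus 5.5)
The plan is to apply Theorem~\ref{thm:1} directly in two cases, by checking its condition on the Frobenius normal form~\eqref{frobenius_normal_form}. The condition concerns only the singular diagonal blocks, so in each case we must show that every singular diagonal block $A_{ii}$ has zero off-diagonal blocks in its block row (for a right triplet) and in its block column (for a left triplet).

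\emph{Nonsingular case.} Since $P^TAP$ is block upper triangular, $\det A=\prod_i \det A_{ii}$. Hence if $A$ is nonsingular, every diagonal block $A_{ii}$ is nonsingular; this includes the $1\times 1$ blocks, which are then nonzero. So there are no singular diagonal blocks, the condition of Theorem~\ref{thm:1} holds vacuously for both rows and columns, and $A$ has both a right and a left triplet.

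\emph{Singular irreducible case.} If $A$ is irreducible, its Frobenius normal form has a single diagonal block, $\nu=1$, with $A_{11}=P^TAP$. The block row and block column of $A_{11}$ contain no blocks other than $A_{11}$ itself, so the requirement $A_{1j}=0$ [$A_{j1}=0$] for $j\ne 1$ is vacuous. The condition of Theorem~\ref{thm:1} therefore holds on both sides.

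The only delicate point is the degenerate $1\times1$ case. If $n=1$ then $A=[0]$ is irreducible by the usual convention, and its normal form is again a single block, so the argument above applies. A direct check confirms this: $u=1$, $v=0$ gives both triplets. Otherwise the proof is just bookkeeping on Theorem~\ref{thm:1}, and I expect no real obstacle.
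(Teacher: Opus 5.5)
Your proof is correct and matches the paper, which simply deduces the corollary from Theorem~\ref{thm:1}: a nonsingular $A$ has no singular diagonal block in its Frobenius normal form, and an irreducible $A$ has a single block, so the condition is vacuous for both the row and the column version. Your direct check of the $1\times 1$ zero matrix ($u=1$, $v=0$) is a harmless extra that avoids depending on whether $[0]$ is counted as irreducible.
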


Theorem \ref{thm:1} includes also some further interesting examples of M-matrices with triplet representation, such as the ones with the block structure
\[
    A = \begin{bmatrix}
    A_{11} & 0\\0 & 0
    \end{bmatrix},
\]
where $A_{11}$ is a nonsingular (possibly empty) block and the rest of the matrix is $0$. These are the only M-matrices whose pseudoinverse is nonnegative \cite[Theorem 3.8]{1}.

Regarding uniqueness, it can be proved that if $A$ is irreducible singular then the triplet is unique (up to multiplication by a positive number) \cite[Thm.~3]{bimm} and the right triplet is such that $Au=\boldsymbol 0$. If $A$ is nonsingular, then there exist infinitely many triplets with independent vectors $u$: indeed, for each $v\geq 0$, $v\neq 0$ we can take $(\offdiag(A), A^{-1}v, v)$, which is a triplet under the mild assumption that $A^{-1}v>0$ (always verified when $A$ is irreducible).

\subsection{Perturbation results}

The natural perturbation model to use with triplets is that of componentwise error. The \emph{componentwise relative error} between a matrix $P\in\mathbb{R}^{m\times n}$ (typically nonnegative) and its perturbation $\tilde{P}$ is 
\begin{equation} \label{cdist}
    e(\tilde{P},P) = \max_{i,j} \frac{\abs{\tilde{P}_{ij}-P_{ij}}}{\abs{P_{ij}}}.
\end{equation}
In other words, to get a small componentwise relative error we require each entry of $\tilde{P}$ to be close (relatively) to the corresponding one of $P$, regardless of its magnitude. It makes sense to consider this distance if the entries of $P$ may have wildly different magnitudes, but each entry is computed individually: for instance probability transition formulas in a Markov chain. This error metric is stricter than the usual normwise relative error, because even tiny entries have to be accurate: for instance, we have
\[
P = \begin{bmatrix}
    0.9999\\
    0.0001
\end{bmatrix}, \quad \tilde{P} = \begin{bmatrix}
    0.9998\\0.0002
\end{bmatrix}, \quad e(\tilde{P}, P) = 1,
\]
an enormous componentwise error, even though the normwise relative error $\norm{\tilde{P} - P}_1 / \norm{P}_1 = 0.0001$ is small. The definition~\eqref{cdist} can be extended to vectors, as seen in the example. Note that the function $e(\cdot, \cdot)$ is not symmetric in its two arguments, like the usual relative error.

One usually sets $0/0=0$ and $b/0=\infty$ for $b\neq 0$ in~\eqref{cdist}, so that the zero entries in $P$ must be preserved for this error to be finite. With this extension, $e(\tilde{P},P) \leq \varepsilon$ is equivalent to
$(1-\varepsilon) P \leq \tilde{P} \leq P(1+\varepsilon)$, when $P$ is nonnegative.

We recall here the most notable componentwise perturbation results for M-matrices. 
\begin{theorem} \label{thm:componentwise perturbation bounds}
    Let $\tilde{A}, A\in\mathbb{R}^{n\times n}$ be M-matrices that have (left or right) triplets $(\offdiag(\tilde{A}), u, \tilde{v})$ and $(\offdiag(A),u,v)$ with the same vector $u$. Suppose that
    \[
    e(-\offdiag(\tilde{A}),-\offdiag(A)) \leq \varepsilon \quad \text{ and } \quad e(\tilde{v},v) \leq \varepsilon
    \]
    for some positive constant $\varepsilon<1$. Then,
    \begin{enumerate}
        \item $(1-\varepsilon)\operatorname{diag}(A) \leq  \operatorname{diag}(\tilde{A}) \leq (1+\varepsilon) \operatorname{diag}(A)$;
        \item $(1-\varepsilon)^n\operatorname{det}(A) \leq  \operatorname{det}(\tilde{A}) \leq (1+\varepsilon)^n \operatorname{det}(A)$;
        \item If $A$ is invertible, then so is $\tilde{A}$, and
        \[
        \frac{(1-\varepsilon)^{n-1}}{(1+\varepsilon)^{n}}A^{-1} \leq  \tilde{A}^{-1} \leq \frac{(1+\varepsilon)^{n-1}}{(1-\varepsilon)^{n}}A^{-1};
        \]
        \item If $\tilde{\lambda}$ and $\lambda$ are the smallest in modulus eigenvalues (so-called ``Perron eigenvalues'') of $\tilde{A}$ and $A$, respectively, then
        \[
        \frac{(1-\varepsilon)^{n}}{(1+\varepsilon)^{n-1}}\lambda \leq  \tilde{\lambda} \leq \frac{(1+\varepsilon)^{n}}{(1-\varepsilon)^{n-1}}\lambda;
        \]
        \item If $\tilde{A}$ and $A$ are singular irreducible, and $\tilde{z},z\in\mathbb{R}^{n}$  are the positive vectors in their kernels (so-called ``Perron eigenvectors''), normalized such that $\norm{\tilde{z}}_1=\norm{z}_1=1$, then
        \[
        \frac{(1-\varepsilon)^{n-1}}{(1+\varepsilon)^{n-1}}z \leq  \tilde{z} \leq \frac{(1+\varepsilon)^{n-1}}{(1-\varepsilon)^{n-1}}z.
        \]
    \end{enumerate}
\end{theorem}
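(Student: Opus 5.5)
Throughout we work with right triplets; the left-triplet case follows by applying the argument to $A^T$ and $\tilde A^T$. Write $N=-\offdiag(A)\ge 0$ and $\tilde N=-\offdiag(\tilde A)\ge0$, so that $(1-\varepsilon)N\le\tilde N\le(1+\varepsilon)N$ entrywise, and similarly for $\tilde v$ and $v$.

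\textbf{Item 1.} It is immediate from~\eqref{eq:1}. Indeed,
\[
A_{ii}u_i=v_i+\sum_{j\ne i}N_{ij}u_j
\]
is a sum of nonnegative terms. Each term of the corresponding expression for $\tilde A_{ii}u_i$ lies within a factor in $[1-\varepsilon,1+\varepsilon]$ of its counterpart, and $u$ is shared by the two triplets.

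\textbf{Item 2.} I would use the GTH (subtraction-free) elimination. Eliminating the first variable replaces $A$ by the Schur complement
\[
S=A_{22}-A_{21}A_{11}^{-1}A_{12}.
\]
Its off-diagonal entries are $-(N_{ij}+N_{i1}N_{1j}/A_{11})$, again with no subtraction, and $S$ has the right triplet $(\offdiag(S),u_{2:n},v_{2:n})$, since $Au=v$ restricts to $Su_{2:n}=v_{2:n}$.

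The key observation is that the perturbed quantities do not simply accumulate errors. For example, $N_{i1}N_{1j}/A_{11}$ gets error factor $(1\pm\varepsilon)^3$, which is too crude. Instead, I would prove the bound through a cleaner route. Since $\det A=\prod_k p_k$, where $p_k$ are the GTH pivots, it suffices to show $p_k$ is within a factor $1\pm\varepsilon$ of its unperturbed value.

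For the sharp constant, I would use a monotonicity/scaling argument rather than error tracking. Consider the matrix $A(t)$ with data $tN,\ t v$, which gives $\det A(t)=t^n\det A$. The determinant is a nondecreasing function of $v$ with $N$ and $u$ fixed. It is also a nonincreasing function of $N$ with $v$ and $u$ fixed; this follows from $\partial\det/\partial(\text{data})$ being expressible through the nonnegative adjugate of an M-matrix.

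Then sandwich: $\tilde A$ has data between $(1-\varepsilon)(N,v)$ and $(1+\varepsilon)(N,v)$. Monotonicity in each component then needs care, because $N$ and $v$ act in opposite directions. The fix is to write $\det A=\det(D_u^{-1})\det(AD_u)$. The matrix $AD_u$ is determined by $N_{ij}u_j$ and by the row sums $v_i$. Its determinant, expanded via the matrix-tree theorem for diagonally dominant matrices, is a polynomial with \emph{nonnegative} coefficients, homogeneous of degree $n$, in the variables $N_{ij}u_j$ and $v_i$. This immediately gives the $(1\pm\varepsilon)^n$ bound. This all-minors matrix-tree expansion is the core step, and I expect it to be the main technical obstacle: one must check that every term is a product of exactly $n$ data entries with positive sign.

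\textbf{Item 3.} I would use $A^{-1}=\adj(A)/\det A$. Each cofactor $\adj(A)_{ji}$ is, by the same forest expansion applied to the $(n-1)\times(n-1)$ principal-type minors (again with triplet data $N$ and $v$ after scaling by $u$), a nonnegative polynomial of degree $n-1$ in the data. It therefore perturbs by at most $(1\pm\varepsilon)^{n-1}$. Dividing by the item 2 bound for $\det A$ gives the stated sandwich. Invertibility of $\tilde A$ follows since $\det\tilde A\ge(1-\varepsilon)^n\det A>0$.

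\textbf{Item 4.} If $\lambda=0$, then $\det A=0$ and item 2 forces $\det\tilde A=0$, so the bound is trivial. Otherwise I would use $\lambda=1/\rho(A^{-1})$. The spectral radius is monotone on nonnegative matrices and homogeneous, so item 3 gives the claimed bounds.

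\textbf{Item 5.} For singular irreducible $A$, the positive kernel vector is proportional to any column of $\adj(A)$ (all columns are multiples of $z$, and entries are positive). Take a fixed column $j$ of $\adj(A)$; its entries are degree-$(n-1)$ nonnegative polynomials, so $\adj(\tilde A)_{:,j}$ is within a factor $(1\pm\varepsilon)^{n-1}$ of $\adj(A)_{:,j}$. Normalizing by the $1$-norm, which is a sum of entries and hence also within a factor $(1\pm\varepsilon)^{n-1}$, produces the ratio $(1\pm\varepsilon)^{n-1}/(1\mp\varepsilon)^{n-1}$ as claimed.
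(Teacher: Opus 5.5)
Your final argument is the paper's: nonnegative homogeneous expansions of $\det(AD_u)$ (degree $n$) and of the entries of $\adj(AD_u)$ (degree $n-1$) in the weights $N_{ij}u_j$ and $v_i$, which the paper simply cites as Lemma~\ref{lem:matrixtree}, followed by $A^{-1}=\adj(A)/\det(A)$, then $\lambda=1/\rho(A^{-1})$ with monotonicity of $\rho$, and finally a normalized column of $\adj(A)$ for $z$; your handling of $\lambda=0$ via item~2 even avoids the paper's reduction to the irreducible case. The abandoned detours in item~2 should be deleted, since they contain false claims: the Schur complement's triplet vector is $v_{2:n}-A_{2:n,1}v_1/A_{11}$, not $v_{2:n}$; for fixed $u,v$ the determinant is nondecreasing (not nonincreasing) in $N$, as the very expansion you end up using shows; and the individual GTH pivots are not within a factor $1\pm\varepsilon$ in general (already for $n=2$, $u=\boldsymbol 1$, all weights equal to $1$, the second pivot can grow by the factor $1+4\varepsilon/3+\varepsilon^2/3$).
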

These results appeared in~\cite{alfa,Ocinneide} for the $u=\boldsymbol{1}$ case, and (in part) in~\cite{xue2012accurate} for the general case. In the appendix we sketch a proof, and give a generalization where also $u$ is allowed to be perturbed. 

The bounds take a simpler form if we only care about first-order perturbations: for instance,
\[
\frac{(1-\varepsilon)^{n-1}}{(1+\varepsilon)^{n}}
= 1 - (2n-1)\varepsilon + \mathcal{O}(\varepsilon^2), \quad 
\frac{(1+\varepsilon)^{n-1}}{(1-\varepsilon)^{n}} = 1 + (2n-1)\varepsilon + \mathcal{O}(\varepsilon^2),
\]
hence $e(\tilde{A}^{-1}, A^{-1}) = (2n-1)\varepsilon + \mathcal{O}(\varepsilon^2)$.

Another appealing variant of this bound is a formulation which is symmetric in $\tilde{A}$ and $A$: if
\[
\frac{1}{1+\varepsilon}(-\offdiag(A)) \leq -\offdiag(\tilde{A}) \leq (1+\varepsilon) (-\offdiag(A)), \quad  \frac{1}{1+\varepsilon} v \leq \tilde{v} \leq (1+\varepsilon)v,
\]
then (with an analogous proof)
\[
\frac{1}{(1+\varepsilon)^{2n-1}} A^{-1} \leq \tilde{A}^{-1} \leq (1+\varepsilon)^{2n-1} A^{-1}.
\]


A consequence of Theorem~\ref{thm:componentwise perturbation bounds} is the surprising fact that the inverse of $A$ is always a well-conditioned function (in the componentwise sense) of the triplet that defines $A$, even if the matrix $A$ is ill-conditioned in the traditional normwise sense. 
%
Figure~\ref{fig:relations} displays the relations between the various quantities: while we can compute the inverse from a triplet in a well-conditioned way, we cannot compute reliably the inverse from the matrix entries, or a triplet from the matrix entries. Hence, when available, a triplet representation is the ideal way to represent an M-matrix. 
\begin{figure}
    \centering
    \begin{tikzpicture}
    \node[draw,rectangle] (triplet) {Triplet $(\offdiag(A),u,v)$};

    \node[draw,rectangle,below=1cm of triplet] (entries) {Matrix entries $A_{ij}$};

    \node[draw,rectangle,right=3.5cm of entries] (inverse) {Inverse $A^{-1}$};

    \draw[red, thick, -{latex[sep=1pt]}] (entries) -- node[below] {ill-conditioned} (inverse);

    \draw[blue, thick, -{latex[sep=1pt]}] ([xshift=-0.5em]triplet.south) -- node[left] {well-conditioned} ([xshift=-0.5em]entries.north);

    \draw[red, thick, -{latex[sep=1pt]}] ([xshift=0.5em]entries.north) -- node[right] {ill-conditioned} ([xshift=0.5em]triplet.south);

    \draw[blue, thick, -{latex[sep=2pt]}] (triplet) -- node[right=15pt] {well-conditioned} (inverse);
    
    \end{tikzpicture}
    \caption{Conditioning of transformations between various quantities, for an M-matrix $A$. The quantities $A^{-1}$ and $\operatorname{diag}(A)$ are well-conditioned functions of the triplet, even in the componentwise sense, but computing the triplet or the inverse from the matrix entries may be ill-conditioned.}
    \label{fig:relations}
\end{figure}
Fortunately, in many applications (such as Markov chains, network theory and differential equations) a triplet can be obtained analytically.

\begin{example}
    Consider $A = A_{\delta}$ as in Example~\ref{ex0}, with $\delta \ll 1$. 
    Take a perturbation of the form $\tilde{A} = A_{\tilde{\delta}}$, with its triplet $(\begin{bsmallmatrix}
        0 & -2\\
        -1 & 0
    \end{bsmallmatrix}, \begin{bsmallmatrix}
    1\\1
\end{bsmallmatrix}, \begin{bsmallmatrix}
    0\\
    \tilde{\delta}
\end{bsmallmatrix})$. If these two triplets are close in the componentwise sense, then we must have $e(\tilde{\delta},\delta)\leq \varepsilon$, and this implies that the inverses are close, since
\[
A_\delta^{-1} = \begin{bmatrix}
    \frac12 + \frac1{2\delta} & \frac1\delta\\
    \frac1{2\delta} & \frac1\delta\\
\end{bmatrix}.
\]
On the other hand, if we look at perturbations in terms of the matrix entries, then the assumption $e(\tilde{A},A) \leq \varepsilon$ (or also $\norm{\tilde{A}-A}/\norm{A}\leq \varepsilon$) does not guarantee that $e(\tilde{\delta},\delta)$ is small: for instance, changing $A_{\delta}$ to $A_{\delta/2}$ amounts to a very small relative perturbation of the entry $A_{2,2}$. So, under a small relative perturbation of the matrix entries, the two inverses $A_{\delta}^{-1}$ and $A_{\tilde{\delta}}^{-1}$ can be very far apart.
\end{example}

\subsection{Subtraction-free algorithms}
Since quantities such as $A^{-1}$ or $\operatorname{diag}(A)$ are well-conditioned functions of a triplet, the natural question is which algorithms can compute them in a stable way in the componentwise sense. The main computational tools are \emph{subtraction-free algorithms}, i.e., algorithms in which there are no subtractions of two real numbers with the same sign. For instance, the formula~\eqref{eq:1} to reconstruct $\operatorname{diag}(A)$ is subtraction-free, since the summands $v_i$ and $-A_{ij}u_j$ all have the same sign. In contrast, computing $v$ from $A$ and $u$ as $v = Au$ involves subtractions, since the summands $A_{ii}u_i$ and $A_{ij}v_j$ (for each $i\neq j$) have opposite signs.

Subtraction-free algorithms are automatically componentwise forward stable: in every product, division, or sum (of numbers with the same sign) the machine error on the result is within $\mathcal{O}(\mathsf{u})$, a small multiple of the machine precision, and errors in the operands are not amplified. A practical example of how to perform componentwise forward error analysis is in~\cite[Section 7]{NguP}; in particular Lemmas~7.1, 7.2, 7.8, 7.9. %

\subsection{The GTH algorithm}
The GTH algorithm, introduced in \cite{grassmann1985regenerative}, was later extended to triplet M-matrices by Alfa et al.~\cite{alfa}. Further developments and analysis can be found in \cite{Ocinneide, xxl2, xue2012accurate}.

Let $A$ be an M-matrix satisfying the right triplet relation \eqref{eqrt}.	In customary Gaussian elimination, the update at the step $\ell$
\begin{equation}\label{1}
A^{(\ell+1)}_{\ell+1:n,\ell:n} = A^{(\ell)}_{\ell+1:n,\ell:n}
- \frac{A^{(\ell)}_{\ell+1:n,\ell}}{A^{(\ell)}_{\ell,\ell}} A^{(\ell)}_{\ell,\ell:n},
\end{equation}
is not subtraction-free in computing the diagonal element $A_{ii}^{(\ell)}$; this may lead to large relative errors when cancellation occurs.
In contrast, the GTH algorithm maintains a sequence of triplets $(\offdiag(M^{(\ell)}),u_\ell,v_\ell)$ 
for the active matrices $M^{(\ell)}=A^{(\ell)}_{\ell:n,\ell:n}\in\mathbb{R}^{(n-\ell+1)\times(n-\ell+1)}$, where $M^{(1)}=A$, $u^{(1)}=u$, $v^{(1)}=v$, $\offdiag(M^{(\ell+1)})$ for $\ell \geq 1$ is obtained using \eqref{1},
\[
u^{(\ell+1)} = u_{\ell+1:n},\quad v^{(\ell+1)} = v^{(\ell)}_{\ell+1:n}-\frac{A^{(\ell)}_{\ell+1:n,\ell}}{A^{(\ell)}_{\ell,\ell}}v^{(\ell)}_\ell,
\]
and the diagonal element $A^{(\ell)}_{\ell,\ell}=M^{(\ell)}_{11}$ is accurately recomputed using formula~\eqref{eq:1} on the triplet of $M^{(\ell)}$. All computation is subtraction-free, as a result, the algorithm is always (forward) stable even without pivoting. 
An analogous algorithm can be designed if a left triplet is provided.

We describe the GTH algorithm in more detail in Section~\ref{sec3}, where we compare various blocked variants of it.

The GTH algorithm requires a larger number of arithmetic operations with respect to Gaussian elimination, the ones needed to construct $v^{(\ell)}$ and the ones in \eqref{eq:1}. These $2(n+1)^2$ operations however are negligible with respect to the cost of the elimination, $\frac{2}{3}n^3+\mathcal O(n^2)$ operations. 

\subsection{Problems solved with the GTH algorithm}

The GTH algorithm allows one to compute the LU factorization of a nonsingular M-matrix. This can be used to solve other basic problems involving M-matrices. 

If $A$ is a nonsingular M-matrix and $b\ge 0$, then, since $A^{-1}\ge 0$ \cite{bp}, 
the solution of the linear system $Ax=b$ is nonnegative. Computing the LU factorization $A=LU$, it can be solved as $U^{-1}(L^{-1}b)$ by forward/back substitution without cancellation, because the off-diagonal elements of $L$ and $U$ are nonpositive. Similarly, one can solve accurately multiple right hand side systems of the type $AX=B$ and $X^TA=B^T$ if $B\ge 0$. In this way one can get the accurate inverse of $A$ as the solution of $AX=I$.

A well-known property of M-matrices is that their principal submatrices and Schur complements are M-matrices as well \cite{bp}. If the matrix $A$ is singular and irreducible or nonsingular, then the diagonal minors (that are nonsingular \cite{bp}) and the Schur complements possess a right triplet that can be computed from the right triplet of $A$.

To show this fact choose a set of indices $\mathcal I\subset\{1,\ldots,n\}$ and denote by $\mathcal J$ its complement. The minor with indices in $\mathcal I$ will be denoted by $A_{\mathcal I\mathcal I}$ and its Schur complement is $S = A_{\mathcal J\mathcal J} - A_{\mathcal J\mathcal I}A_{\mathcal I\mathcal I}^{-1}A_{\mathcal I\mathcal J}$. We get the following right triplets
\[
    (A_{\mathcal I\mathcal I},u_{\mathcal I},v_{\mathcal I}-A_{\mathcal I\mathcal J}u_{\mathcal J}),\qquad
    (S,u_{\mathcal J},v_{\mathcal J}-A_{\mathcal J\mathcal I}A_{\mathcal I\mathcal I}^{-1}v_{\mathcal I}).
\]
Note that the computation of the triplet of the Schur complement requires the solution of a linear system with an M-matrix with right triplet coefficient.

An analogous argument can be used for the left triplets.

\subsection{Further problems}
The GTH algorithm and the triplet representation are not limited to solving linear systems or computing LU factorizations. They provide an effective tool for developing componentwise accurate algorithms for several other matrix problems that are challenging for standard approaches. In particular, when an M-matrix is given in triplet form, operations such as computing its square root, solving certain matrix equations, obtaining its singular values with high relative accuracy, or finding its smallest (in modulus) eigenvalue can be performed with limited or no subtractive cancellation. For each problem, we highlight the ideas that avoid cancellation and explain how these methods are realized in the \mmatrix-toolbox.

\subsubsection{Safe subtractions}
In some of the following algorithms, one must compute $P = I-\frac{1}{\zeta} A$, for a certain M-matrix $A$ and a scalar $\zeta>0$, to obtain a nonnegative matrix $P$ whose entries are used later in the algorithm. This requires subtractions (between two nonnegative numbers) to compute the diagonal entries of the result. Hence, this operation takes us out of the framework of subtraction-free algorithms. To make sure that the resulting algorithm is componentwise accurate, we must exploit a degree of freedom: the fact that $\zeta$ can be chosen arbitrarily, with the only restriction that $I-\frac{1}{\zeta} A \geq 0$. So instead of choosing the minimum possible value $\zeta_{\mathrm{opt}} = \max A_{ii}$, we select $\zeta = \gamma \zeta_{\mathrm{opt}}$, with $\gamma>1$ being an ``overshooting'' safety factor to ensure that the two operands of the subtraction $P_{ii} = 1 - \frac{A_{ii}}{\zeta}$ are not too close. In this way, if the computed result of the operation $\frac{1}{\zeta} A_{ii}$ is affected by a relative error $\varepsilon_0$ (due not only to the division itself, but also possibly to the accumulation of earlier errors in the computation of $A_{ii}$), then a quick error analysis shows that this error is propagated to a relative error of magnitude at most $\frac{1}{\gamma - 1}\varepsilon_0$ %

on $P_{ii}$. (We can neglect the machine error in the final subtraction $1 - \frac{A_{ii}}{\zeta}$, because it is a relative error on $P_{ii}$ of the order of the machine precision; for the values of $\gamma$ of interest it is negligible with respect to $\varepsilon_0$.) Hence by choosing, for instance, $\gamma = 1.5$, we can ensure that the relative error $\varepsilon_0$ is at most doubled. This technique was pioneered in this area in~\cite{xue2012accurate}, but related ideas are found in the choice of shifts in algorithms for highly-accurate bidiagonal SVD.

In the following algorithms, this safety factor $\gamma$ causes a mild reduction in the convergence speed (which is usually quadratic anyway), but it makes sure the algorithm is ``subtraction-safe'', even if not subtraction-free. 

\subsubsection{Accurate square root $A^{1/2}$}
If an M-matrix $A$ has a right triplet, then there exists the square root $A^{1/2}$ and it is a matrix with right triplet \cite{bimm}. Similarly, it can be proved that this holds for M-matrices with a left triplet as well. 

Algorithms for computing the right triplet of $A^{1/2}$ are presented in \cite{bimm}. According to the experiments, the best algorithm is the Newton method, written in the form of Cyclic Reduction \cite[Algorithm 2]{bimm}.

The Cyclic Reduction algorithm provides $A^{1/2}=\lim_\ell Z_\ell$, where $\{Z_\ell\}$ is computed with the iteration
\begin{equation}
    \left\{
    \begin{array}{l}
    W_0 = I - A,\qquad Z_0 = \frac{1}{2}(I + A)\\
    W_{\ell+1} = \frac{1}{4}W_{\ell}Z_{\ell}^{-1}W_{\ell},\qquad \ell = 0,1,2,\ldots \\
    Z_{\ell+1} = Z_{\ell}-\frac{1}{2} W_{\ell+1}.
    \end{array}
    \right.
\end{equation}

If $A$ is an M-matrix such that $I-A$ is nonnegative and $\rho(I-A)\le 1$ then $Z_\ell$ is an M-matrix and $W_\ell$ is nonnegative. If, moreover, $A$ has a triplet, then a triplet of $Z_0$ is obtained immediately and a triplet of $Z_{\ell+1}$ can be computed from a triplet of $Z_\ell$ by a componentwise accurate algorithm using the auxiliary sequence
\begin{equation} \label{eq:auxiliarysqrtm}
    \left\{
    \begin{array}{l}
    p_0 = v,\\ p_{\ell+1} = p_{\ell} + \frac{1}{2}W_{\ell+1} Z_\ell^{-1}p_\ell,
    \end{array}
    \right.
\end{equation}
and the triplet representation of $Z_{\ell+1}$ is ($\offdiag(Z_\ell-\frac{1}{2}W_{\ell+1}), u, p_{\ell+1}+\frac{1}{2}W_{\ell+1}u$).

The algorithm has possible cancellation in the initialization, because of the subtraction $I-A$.

It is possible to exploit the relation $A^{1/2} = \zeta^{1/2}(A/\zeta)^{1/2}$ to make $I-A/\zeta$ nonnegative with $\rho(I-A/\zeta)\le 1$, by choosing $\zeta =\gamma\max_{i=1,\ldots,n} A_{ii}$, with $\gamma>1$. This may also make the algorithm subtraction-safe for $\gamma$ sufficiently large because $|1-A_{ii}/\zeta|\geq 1-\frac{1}{\gamma}$, but on the other hand choosing a large $\gamma$ may slow down the convergence. The parameter $\gamma$ can be set by the user (the default choice is $\gamma=1.5$), together with the maximum number of iterations, and the tolerance $\varepsilon$ for the stopping criterion
\begin{equation}
    \max_{i,j}\frac{|(W_{\ell+1})_{ij}|}{(Z_\ell)_{ij}}\le \varepsilon.
\end{equation}

The corresponding algorithm for a matrix with left triplet is obtained by exploiting the fact $(A^T)^{1/2} = (A^{1/2})^T$.
\subsubsection{M-matrix algebraic Riccati equations} Another problem that has received considerable interest in the literature is that of \emph{M-matrix algebraic Riccati equations} (MARE) \cite{bimbook,doublingbook}. In a MARE, we want to compute the minimal nonnegative solutions $X,Y$ to the matrix equations
\begin{align} \label{mares}
    XDX - AX - XB + C = 0, \quad 
    YCY - BY - YA + D = 0
\end{align}
(or also only the first one; but in several applications both unknowns have a practical meaning). Here, $A\in\mathbb{R}^{m\times m}, B\in\mathbb{R}^{n\times n}, C, D^T, X, Y^T\in\mathbb{R}^{m\times n}$, and the four matrix coefficients can be arranged to form the M-matrix
\begin{equation} \label{mareM}
M = \begin{bmatrix}
    B & -D\\-C & A
\end{bmatrix}.    
\end{equation}
For dense equations, the state-of-the-art algorithm is the \emph{structured doubling algorithm} (SDA / ADDA, \cite[Chapter~6]{doublingbook}), i.e., the iteration
\begin{gather*}
    \begin{bmatrix}
        E_0 & Y_0\\
        X_0 & F_0
    \end{bmatrix} = \left(I + M\begin{bmatrix}
        \alpha I & 0\\
        0 & \beta I
    \end{bmatrix} \right)^{-1}\left(I - M\begin{bmatrix}
        \beta I & 0\\
        0 & \alpha I
    \end{bmatrix}\right),\\
    \begin{aligned}
    E_{k+1} &= E_k(I-Y_kX_k)^{-1}E_k, & X_{k+1} &= X_k + F_k(I-X_kY_k)^{-1}X_k E_k,\\
    F_{k+1} &= F_k(I-X_kY_k)^{-1}F_k, &
    Y_{k+1} &= Y_k + E_k(I-Y_kX_k)^{-1} Y_k F_k.
    \end{aligned}
\end{gather*}

When $\alpha, \beta$ are sufficiently small, one can show that $E_k, F_k, X_k, Y_k$ are nonnegative at each step, that $X_k \to X, Y_k \to Y$ (typically monotonically, and with quadratic convergence speed), and that all the matrices to invert are M-matrices, whose triplets can be computed analytically via auxiliary sequences in the same style as~\eqref{eq:auxiliarysqrtm}. We refer the reader to~\cite[Chapter~6]{doublingbook} for more details; in particular, the auxiliary sequences are~\cite[Equation (6.38)]{doublingbook}.


\subsubsection{Accurate singular values}
Standard algorithms for the singular value decomposition (SVD)  compute singular values of a matrix $A$ with error $|\hat{\sigma}_i - \sigma_i| = \mathcal{O}(\varepsilon)\|A\|$. Consequently, when $\sigma_i \ll \|A\|\varepsilon$, the computed $\hat{\sigma}_i$ may have no correct digits.  For certain M-matrices, however, it is possible to compute all singular values with high relative accuracy, meaning $|\hat{\sigma}_i - \sigma_i| = \mathcal{O}(\varepsilon)\sigma_i$, by exploiting a triplet representation.

Demmel and Koev~\cite{DemmelKoev2004} described this approach for $u = \mathbf{1}$. Here, we extend their framework by assuming that $A$ is an M-matrix with triplet representation $(\operatorname{offdiag}(A), u, v)$.

The algorithm consists of two parts.
The first part is to compute an $LDU$ factorization of $A$ using Gaussian elimination with diagonal pivoting, implemented via a GTH-like algorithm. That is, we run a variant of Algorithm~\ref{alg:dmxtf2} below, but at the beginning of each iteration of the \textbf{for} cycle we recover from the triplets all diagonal entries $U_{k,k},U_{k+1,k+1},\dots,U_{n,n}$, and permute rows and columns symmetrically so that the largest one becomes $U_{k,k}$. The algorithm then computes a factorization $PAP^T = LU$. If we set $D$ to be the diagonal matrix with entries $U_{1,1},U_{2,2},\dots,U_{n,n}$, and $\hat{U} = D^{-1}U$, then $PAP^T = LU = LD\hat{U}$ is a factorization in which $L$ and $\hat{U}$ are a lower and an upper triangular matrix, respectively, with unit diagonal. All entries in this factorization are componentwise accurate, thanks to the properties of the GTH algorithm.

Once this accurate factorization $PAP^T=LD\hat{U}$ is completed, we can write $A=P^TLD\hat{U}P$, then we can apply the accurate SVD algorithm from Demmel et al.~\cite[Algorithm 3.1]{Demmel1999}. Writing
\[
A=XDY^{T}, \qquad X=P^TL, \qquad Y=P\hat{U}^{T},
\]
the SVD is then computed via the following steps:
\begin{enumerate}
	\item Compute a QR factorization with pivoting $XD=QRP_{qr}$.
	\item Form the matrix $W=RP_{qr}Y^{T}$.
	\item Compute the SVD of $W$, $W=\bar{U}\Sigma V^{T}$, using a one-sided Jacobi method.
	\item Set $U=Q\bar{U}$.
\end{enumerate}
The overall computational cost is $\mathcal{O}(n^3)$ and does not depend on the condition number of $A$.

The error analysis in~\cite{Demmel1999} states that if $X$ and $Y$ are well-conditioned, then this algorithm delivers componentwise accurate singular values $\sigma_i$. (Note that the entries of the orthogonal matrices $U$ and $V$ are not guaranteed to be componentwise accurate, only the singular values.)

We wish to point out another subtle detail: if $u=\boldsymbol{1}$, as in~\cite{DemmelKoev2004}, then the diagonal pivoting described above coincides with complete pivoting, since in an M-matrix with $u=\boldsymbol{1}$ the largest-magnitude entry is always on the diagonal. If $u$ has entries of different magnitude, however, then the computed factorization is not Gaussian elimination with complete pivoting, and the condition numbers $\kappa(L)$ and $\kappa(\hat{U})$ could be larger, degrading the bounds from~\cite{Demmel1999}.

\subsubsection{Accurate computation of the smallest eigenvalue}
In many applications, such as Markov chains, the smallest eigenvalue of an M-matrix plays a crucial role. However,  algorithms like the QR method often produce inaccurate results when the smallest eigenvalue is either extremely small or ill-conditioned. Alfa, Xue, and Ye \cite{alfa2002accurate} proved that for M-matrices with triplet representation, the smallest eigenvalue is determined to high relative accuracy. 
Their algorithm is obtained by combining a shifted inverse iteration with the GTH algorithm. 
The idea is to avoid forming $A-\lambda I$ explicitly: instead, the triplet is updated at each step using nonnegative quantities. The  algorithm can be summarized as follows.

\begin{algorithm}[H]
	\caption{Shifted inverse iteration for smallest eigenvalue~(\lstinline{eigmin}) \protect{\cite[Algorithm~2]{alfa2002accurate}}}
	\label{alg:eigmin}
	\begin{algorithmic}[1]
		\REQUIRE $A \in \mathbb{R}^{n \times n}$ (M-matrix), $u,v \in \mathbb{R}^n$ forming a triplet, tolerance $\texttt{tol} > 0$, maximum number of iterations \lstinline{maxit}.
		\ENSURE The smallest eigenvalue $\lambda$ of $A$ and the corresponding eigenvector $u$
		\STATE $\lambda \gets \min(v \operatorname{\texttt{./}} u)$ \COMMENT{element-wise division}
		\STATE $v \gets v - \lambda u$
         \FOR{$k = 1, 2, \dots, \texttt{maxit}$}
        \STATE\COMMENT{Invariant: $(\operatorname{offdiag}(A), u, v)$ is a triplet for $A-\lambda I$}
        \IF{left triplet}
		\STATE Solve $z^T (A-\lambda I) = u^T$ for $z^T$ \COMMENT{using the triplet}
		\ELSE
		\STATE Solve $(A-\lambda I) z = u$ for $z$ \COMMENT{using the triplet}
		\ENDIF
		\STATE $p \gets {u} \operatorname{\texttt{./}} {z}$ 
		\STATE $\lambda \gets \lambda + \min(p)$
		\STATE $u \gets z / {\|z\|}_\infty$
		\STATE $v \gets u (p - \min(p))$
		\IF{$(\max(p) - \min(p)) / \lambda \leq \texttt{tol}$}
		\STATE \textbf{break}
		\ENDIF
		\ENDFOR
	\end{algorithmic}
\end{algorithm}

\section{Fortran implementation}\label{sec3}
The core numerical routines of the \texttt{mmatrix}-toolbox are implemented in Fortran for the sake of efficiency and compiled as Mex files for use in \matlab. The toolbox provides several algorithms for the LU factorization of an M-matrix, including unblocked, recursive, and blocked implementations, described in the following subsections. Following \lapack{} conventions, we use the second and third letter to specify the matrix storage format: \lstinline{mm} for an M-matrix given via a right triplet, and \lstinline{ml} for an M-matrix given via a left triplet.

\subsection{Unblocked algorithms: \texttt{dmmtf2} and \texttt{dmltf2}}
The routines \texttt{dmmtf2} and \texttt{dmltf2} contain the unblocked (BLAS level-2)
algorithms for the accurate computation of the LU of an M-matrix using the right and left triplet representations, respectively. They play the same role as \texttt{dgetf2} in \lapack, providing the simplest but slowest algorithm for LU factorization. We present pseudocode for both in Algorithm~\ref{alg:dmxtf2}.
\begin{algorithm}
\caption{Generalized GTH Algorithm, unblocked version}
\label{alg:dmxtf2}
\begin{algorithmic}[1]
\REQUIRE $A \in \mathbb{R}^{n \times n}$ (M-matrix), $u,v \in \mathbb{R}^n$ forming a triplet.
\ENSURE $L,U$ which form an LU factorization of $A$
\STATE $L \gets I_n$, $U\gets A$
\FOR{$k = 1$ to $n-1$}
    \STATE \COMMENT{Invariants: $LU=A$, $(U_{k:n,k:n}, u_{k:n}, v_{k:n})$ is a triplet}
    \STATE{Recover the pivot $U_{k,k}$ from the triplet}
    \STATE $L_{k+1:n,k} \gets U_{k+1:n,k} / U_{k,k}$
    \STATE $U_{k+1:n,k} \gets 0$
    \STATE $U_{k+1:n,k+1:n} \gets U_{k+1:n,k+1:n} - L_{k+1:n,k}U_{k,k+1:n}$
    \STATE Update $v_{k+1:n}$ to maintain the triplet invariant
\ENDFOR
\end{algorithmic}
\end{algorithm}
The algorithm updates at each step a partial factorization
\begin{equation} \label{block_partial_factorization}
    A = \begin{bmatrix}
        A_{11} & A_{12}\\
        A_{21} & A_{22}
    \end{bmatrix} = \begin{bmatrix}
        L_{11} & 0\\
        L_{21} & I_{n-k}
    \end{bmatrix}
    \begin{bmatrix}
        U_{11} & U_{12}\\
        0 & S
    \end{bmatrix}, \quad u = \begin{bmatrix}
        u_1\\ u_2
    \end{bmatrix},
    v = \begin{bmatrix}
        v_1 \\ v_2
    \end{bmatrix},
\end{equation}
with $S = A_{22} - A_{21}A_{11}^{-1}A_{12}$, expanding at each step the dimension of the  first factored block from $k-1$ to $k$. After each step of factorization, the vector $v$ is updated so that $(U_{k+1:n,k+1:n}, u_{k+1:n}, v_{k+1:n})$ remains a triplet.

The left and right triplet variants differ only in two lines. The first one is the way the pivot is recovered:
\[
U_{kk} =
\frac{
    v_k - \sum_{i=k+1}^{n} u_i U_{i k}
}{u_k}, \quad
U_{kk} =
\frac{
    v_k - \sum_{j=k+1}^{n} U_{k j} u_j
}{u_k}
\]
for the left and right version respectively; this update formula follows directly from the first column or row of the invariant triplet relation; note that it is cancellation-free, since $U_{ij} \leq 0$ for $i\neq j$. The second difference is the way $v$ is updated to maintain the triplet invariant,
\[
v_{k+1:n} = v_{k+1:n} - \frac{v_k}{U_{kk}}U_{k,k+1:n}, \quad v_{k+1:n} = v_{k+1:n} - L_{k+1:n,k} v_k.
\]

In the Fortran code, using the classical \lapack{} style, we do not allocate additional memory for $L$ and $U$, but rather overwrite the matrix $A$ in place at each step: the nontrivial entries of $U$ are stored in the upper triangular part of $A$, and those of $L$ in the strictly lower triangular part of $A$.

\subsection{Recursive GTH kernels: \texttt{dmmtrf2} and \texttt{dmltrf2}}
The routines \texttt{dmmtrf2} and \texttt{dmltrf2} implement a recursive,
cache-oblivious variant of the GTH factorization. Their structure follows
the same idea as \texttt{dgetrf2} in \lapack: the matrix is partitioned recursively, and each block is reduced using a combination of triangular
solves and updates. 

This version of the algorithm operates on the more general case of a rectangular matrix $A \in \mathbb{R}^{m\times n}$, with $m\geq n$, in which $A_{ij} \leq 0$ for each $i\neq j$. Also in this case the diagonal entries of $A$ can be computed accurately by the (rectangular) left triplet relation $u^T A = v^T$, with $u \in\mathbb{R}^m,v\in\mathbb{R}^n$, or a right triplet relation $Au=v$, with $u \in\mathbb{R}^n,v\in\mathbb{R}^m$; in the latter case, only the first $n$ entries of $v$ need to be provided, since the matrix $A$ only has $n$ diagonal entries to recover.

We set
\begin{equation} \label{block_factorization}
    A = \begin{bmatrix}
        A_{11} & A_{12}\\
        A_{21} & A_{22}
    \end{bmatrix} = \begin{bmatrix}
        L_{11} & 0\\
        L_{21} & L_{22}
    \end{bmatrix}
    \begin{bmatrix}
        U_{11} & U_{12}\\
        0 & U_{22}
    \end{bmatrix}, \quad u = \begin{bmatrix}
        u_1\\ u_2
    \end{bmatrix},
    v = \begin{bmatrix}
        v_1 \\ v_2
    \end{bmatrix},
\end{equation}
where the leading block has size $\nu = \lfloor n/2\rfloor$. 
The algorithm call itself recursively twice, once on the left block $\begin{bsmallmatrix}
    A_{11}\\
    A_{21}
\end{bsmallmatrix}\in\mathbb{R}^{m\times \nu}$ and once on the Schur complement $S$. The reason for this peculiar asymmetric partitioning is that $\begin{bsmallmatrix}
    A_{11}\\
    A_{21}
\end{bsmallmatrix}$ is formed by consecutive entries using the Fortran column-major memory layout; hence at some point in the recursion the algorithm operates on a consecutive block of memory that fits in the cache, in the style of cache-oblivious algorithms~\cite{frigo1999cache}. We present pseudocode as Algorithm~\ref{alg:dmxtrf2}.
\begin{algorithm}
\caption{Generalized GTH Algorithm, recursive version}
\label{alg:dmxtrf2}
\begin{algorithmic}[1]
\REQUIRE $A \in \mathbb{R}^{m \times n}$ (M-matrix), $u\in \mathbb{R}^m$ (if left) or $\mathbb{R}^n$ (if right) , $v_1\in\mathbb{R}^n$.
\ENSURE $L,U$ which form an LU factorization of $A$
\IF{$n = 1$}
\STATE Recover $A_{11}$ from the triplet relation
\RETURN $L = \frac{1}{A_{11}}A,\, U = A_{11}$
\ELSE
\STATE $\nu \gets \left\lfloor n/2 \right\rfloor$ \COMMENT{Size of the blocks with index 1}
\STATE Compute a triplet for $ \begin{bsmallmatrix} A_{11}\\ A_{21} \end{bsmallmatrix} $
\STATE Call the function recursively on $ \begin{bsmallmatrix} A_{11}\\ A_{21} \end{bsmallmatrix} $, obtaining $\begin{bsmallmatrix}
    L_{11}\\L_{21}
\end{bsmallmatrix},\,U_{11}$
\STATE $U_{12} \gets L_{11}^{-1}A_{12}$
\STATE $S \gets A_{22} - L_{21}U_{12}$
\STATE Update $v_2$ so that $(S,u_2,v_2)$ is a triplet
\STATE Call the function recursively on $S$, obtaining $L_{22},U_{22}$
\RETURN $\begin{bmatrix}
    L_{11} & 0\\
    L_{21} & L_{22}
\end{bmatrix}, \, 
\begin{bmatrix}
    U_{11} & U_{12}\\
    0 & U_{22}
\end{bmatrix}$
\ENDIF
\end{algorithmic}
\end{algorithm}

This time, in the implementation there is a more substantial difference between the left and right case. In the left case, a left triplet for the left panel is readily obtained without computation as $(\begin{bsmallmatrix}
    A_{11}\\
    A_{21}
\end{bsmallmatrix}, u, v_1)$. In the right case, a triplet is $(\begin{bsmallmatrix}
    A_{11}\\A_{21}
\end{bsmallmatrix}, u_1, v_1 - A_{12}u_2)$, and we need additional storage to store the result of the computation $v_1 - A_{12}u_2 \in \mathbb{R}^\nu$. (We cannot overwrite the entries of $v$, since they are all needed in the following parts of the algorithm.)

In the \lapack{} style, this storage space must be given as a work buffer provided by the caller. The right variant of this recursive version \lstinline{dmmtrf2} needs a work buffer of size $n$: $\nu$ entries to store the entries of $v_1 - A_{12}u_2$, and additional $n-\nu$ entries to pass to the two recursive calls as \emph{their} work buffers.

The base case where $A$ is a single vector is trivial and requires little discussion, but to completely specify Algorithm~\ref{alg:dmxtrf2} we describe how to update $v_2$. In the right version, from the block factorization~\eqref{block_partial_factorization} and $Au=v$ we get
\[
\begin{bmatrix}
    U_{11} & U_{12} \\
    0 & S
\end{bmatrix}
\begin{bmatrix}
u_1 \\
    u_2
\end{bmatrix}= 
\begin{bmatrix}
    L_{11} & 0\\
    L_{21} & I
\end{bmatrix}^{-1}
\begin{bmatrix}
    v_1\\
    v_2
\end{bmatrix} = 
\begin{bmatrix}
    L_{11}^{-1}v_1 \\
    v_2 - L_{21}L_{11}^{-1}v_1
\end{bmatrix}.
\]
From the second block row we can read off $Su_2 = v_2 - L_{21}L_{11}^{-1}v_1$, which is the required triplet. In the left version, from the block factorization
\[
A = \begin{bmatrix}
    L_{11} & 0\\
    L_{21} & S
\end{bmatrix}\begin{bmatrix}
    U_{11} & U_{12}\\
    0 & I
\end{bmatrix}
\]
and $u^T A = v^T$ we get
\[
\begin{bmatrix}
    u_1^T & u_2^T
\end{bmatrix}
\begin{bmatrix}
    L_{11} & 0\\
    L_{21} & S
\end{bmatrix}
= \begin{bmatrix}
    v_1^T & v_2^T
\end{bmatrix}
\begin{bmatrix}
    U_{11} & U_{12}\\
    0 & I
\end{bmatrix}^{-1}
= \begin{bmatrix}
    v_1^T U_{11}^{-1} & v_2^T - v_1^T U_{11}^{-1}U_{12}
\end{bmatrix},
\]
and from the second block we read off the triplet $u_2^T S = v_2^T - v_1^T U_{11}^{-1}U_{12}$.

\subsection{Blocked algorithms: \texttt{dmmtrf} and \texttt{dmltrf}}
The routines \texttt{dmmtrf} and \texttt{dmltrf} implement the triplet-preserving
elimination in a blocked, panel-wise fashion using the unblocked functions
\texttt{dmmtf2} and \texttt{dmltf2}.

These algorithms take as input a block size $n_b$; the only significative case is $n_b < n$, otherwise we fall back to one of the previous variants. We update at each step a partial factorization of the form~\eqref{block_partial_factorization}. We describe the first step; then in the successive ones we apply the same strategy to the Schur complement $S$.

\begin{algorithm}
\caption{Generalized GTH Algorithm, blocked version}
\label{alg:dmxtf22}
\begin{algorithmic}[1]
\REQUIRE $A \in \mathbb{R}^{n \times n}$ (M-matrix), $u,v \in \mathbb{R}^n$ forming a triplet, block size $n_b$
\ENSURE $L,U$ which form an LU factorization of $A$
\STATE $L \gets I_n$, $U\gets A$
\FOR{$k = 1, n_b+1, 2n_b+1, \dots$ up to $n-1$}
    \STATE \COMMENT{Invariants: $LU=A$, $(U_{k:n,k:n}, u_{k:n}, v_{k:n})$ is a triplet}
    \STATE{$\ell\gets \min(k+n_b-1, n)$} \COMMENT{last index of the working block}
    \STATE{Compute $\hat{v}_{k:\ell}$ such that $U_{k:\ell,k:\ell}, u_{k:\ell}, \hat{v}_{k:\ell}$ is a triplet}
    \STATE{Call the unblocked algorithm to factor $U_{k:\ell,k:\ell}$, obtaining $L_{k:\ell,k:\ell}$ and $U_{k:\ell,k:\ell}$}
    \STATE $L_{\ell+1:n,k:\ell} \gets U_{\ell+1:n,k:\ell} \, U_{k:\ell,k:\ell}^{-1}$
    \STATE $U_{\ell+1:n,k:\ell} \gets 0$
    \STATE $U_{k:\ell,\ell+1:n} \gets L_{k:\ell,k:\ell}^{-1} \, U_{k:\ell,\ell+1:n}$
    \STATE $U_{\ell+1:n,\ell+1:n} \gets U_{\ell+1:n,\ell+1:n} - L_{\ell+1:n,k:\ell}U_{k:\ell,\ell+1:n}$
    \STATE Update $v_{\ell+1:n}$ to maintain the triplet invariant
\ENDFOR
\end{algorithmic}
\end{algorithm}

At step $1$, the diagonal block $A_{11} \in \mathbb{R}^{n_b \times n_b}$ is factored using the unblocked algorithm; for this, we need the triplet representation $(\offdiag(A_{11}), u_1,\hat v_1)$, where $u_1$ is made of the first $n_b$ elements of $u$ and
\[
\hat{v}_1 = v_1 - A_{12}u_2 \quad \text{(right)}, \qquad
\hat{v}_1^T = v_1^T - u_2^T A_{21} \quad \text{(left)}.
\]
Work buffer space is needed for the $n_b$ entries of $\hat{v}_1$. After computing the factorization
$A_{11} = L_{11}U_{11},$
the remaining blocks are obtained via triangular solves,
\[
U_{12} = L_{11}^{-1}A_{12}, \qquad
L_{21} = A_{21}U_{11}^{-1},
\]
and the trailing matrix is updated by 
$ S = A_{22} - L_{21}U_{12}$, with triplet $(\offdiag(S), u_2, v_2)$, where the triplet vector $v_2$ is updated as follows
\begin{equation}\label{eqrl}
	v_2 :=  v_2 - L_{21}L_{11}^{-1}v_1 \quad \text{(right)}, \qquad
	v_2^T := v_2^T - v_1^T U_{11}^{-1}U_{12} \quad \text{(left)}.
\end{equation}
\subsection{Recursive blocked algorithms: \texttt{dmmtrff} and \texttt{dmltrff}}
The routines \texttt{dmmtrff} and \texttt{dmltrff} combine panel blocking with
recursive blocked algorithms. At each iteration, the diagonal block $A_{11}$ is factorized using the recursive kernel (\texttt{dmmtrf2} or \texttt{dmltrf2}) applied to the corresponding  triplet. The subsequent steps are identical to those in the blocked algorithms:
\[
U_{12} = L_{11}^{-1}A_{12}, \qquad
L_{21} = A_{21}U_{11}^{-1}, \qquad
S = A_{22} - L_{21}U_{12}.
\]
The triplet vector is updated in the same way as in the blocked case.
\subsection{Matrix inversion via triangular factors (\texttt{dmmtri}) }\label{subri}	
The \texttt{dmmtri} algorithm, analogous to \lapack's \texttt{dgetri}, computes the inverse of a matrix from its LU factors. Let $A \in \mathbb{R}^{n \times n}$ and
$A = LU,$ where $L$ is unit lower triangular and $U$ is upper triangular. The inverse is given by
$A^{-1} = U^{-1} L^{-1}.$ The algorithm proceeds in two steps. First, it computes $U^{-1}$, and then it solves $X L = U^{-1}$ for $X=A^{-1}$. More precisely, let
\begin{equation*}
	L =
	\begin{bmatrix}
		L_{11} & 0 \\
		L_{21} & L_{22}
	\end{bmatrix}, \quad
	X =
	\begin{bmatrix}
		X_1 & X_2
	\end{bmatrix}, \quad
	U^{-1} =
	\begin{bmatrix}
		U_1 & U_2
	\end{bmatrix},
\end{equation*}
where $L_{11}$ and $L_{22}$ are unit lower triangular matrices.
Substituting into $XL = U^{-1}$ gives
\begin{equation*}
	\begin{bmatrix}X_1 & X_2\end{bmatrix}
	\begin{bmatrix}
		L_{11} & 0 \\
		L_{21} & L_{22}
	\end{bmatrix}
	=
	\begin{bmatrix}X_1 L_{11} + X_2 L_{21}& X_2 L_{22}\end{bmatrix}
	=
	\begin{bmatrix}U_1 & U_2\end{bmatrix}.
\end{equation*}
This yields $X_2 L_{22} = U_2$ and $ X_1 L_{11} = U_1 - X_2 L_{21}$.
Hence,
\begin{align*}
	X_2  = U_2 L_{22}^{-1}, \quad	X_1 = \left(U_1 - X_2 L_{21}\right) L_{11}^{-1}.
\end{align*}
The computation proceeds blockwise from right to left. Each block column $X_2$ is obtained by solving a triangular system with $L_{22}$. The algorithm overwrites the input matrix $A$ with its inverse.
\subsection{Fortran routine interface}

The GTH-based routines follow a \lapack{}-style design. The prefix
\texttt{D} denotes double precision, while the next characters specify the triplet type: \texttt{MM} for right, and \texttt{ML} for left.

The unblocked and blocked algorithms are implemented as standard Fortran
subroutines with interface
\begin{lstlisting}[language=Fortran]
    SUBROUTINE DMxYYY(M, N, A, LDA, U, V, WORK, NB, INFO)
\end{lstlisting}
where \texttt{x} selects the triplet representation and \texttt{YYY} is one
of \texttt{TF2}, \texttt{TRF}, or \texttt{TRFF}.

The arguments follow \lapack{} conventions: \texttt{M} and
\texttt{N} define the matrix dimensions; \texttt{A(LDA,*)} stores the
matrix in column-major format and is overwritten by the $LU$ factors;
\texttt{LDA} is the leading dimension; \texttt{U} and \texttt{V} define the
triplet representation; \texttt{WORK} provides workspace (used only in
blocked variants); \texttt{NB} is the block size; and \texttt{INFO} is the
exit flag.

The recursive kernels are defined as follows.
\begin{lstlisting}[language=Fortran]
    RECURSIVE SUBROUTINE DMxTRF2(M, N, A, LDA, U, V, WORK, INFO)
\end{lstlisting}
These routines implement cache-oblivious factorizations. Their arguments
coincide with those above, except that the block size parameter is omitted.
The workspace array \texttt{WORK} is required only for the right-triplet
variant (\texttt{dmmtrf2}), where it stores temporary values of
\texttt{V}.

\subsection{Work buffer requirements}
	The different algorithms have distinct memory requirements, summarized in Table~\ref{tabw}.
	\begin{table}[htbp]
		\centering
		\begin{tabular}{lccc}
			\toprule
			Routine & Work buffer size & BLAS level & Algorithm structure \\
			\midrule
			\texttt{dmmtf2} & 0 & 2 & unblocked \\
			\texttt{dmltf2} & 0 & 2 & unblocked \\
			\midrule
			\texttt{dmmtrf2} & $n$ & 2 and 3 & recursive \\
			\texttt{dmltrf2} & 0 & 2 and 3 & recursive \\
			\midrule
			\texttt{dmmtrf} & $n_b$ & 3 & blocked (unblocked panel) \\
			\texttt{dmltrf} & $n_b$ & 3 & blocked (unblocked panel) \\
			\midrule
			\texttt{dmmtrff} & $2n_b$ & 3 & blocked (recursive panel) \\
			\texttt{dmltrff} & $n_b$ & 3 & blocked (recursive panel) \\
			\bottomrule
		\end{tabular}
		\caption{Work buffer requirements for each variant. The buffer size is given in terms of the block size $n_b$ or matrix dimension $\min(m,n)$. BLAS level 2/3 indicates a mix of level-2 and level-3 operations.}
		\label{tabw}
	\end{table}
	The work buffer is used to store temporary vector values that would otherwise be overwritten during the factorization. In the blocked variants, the buffer holds the values of $v$  for the current panel before they are updated; this allows algorithms to preserve the triplet relationship throughout the factorization.
	
\section{The \texttt{mmatrix} class}\label{sec4}
The \texttt{mmatrix}-Toolbox is implemented in Matlab throught the \texttt{mmatrix} class that provides an object-oriented interface for working with M-matrices with triplet representation. The main properties of an \texttt{mmatrix} object are:
	\begin{itemize}
		\item \texttt{obj.M}: a copy of the M-matrix (with a recomputed accurate diagonal),
		\item \texttt{obj.u}, \texttt{obj.v}: the triplet vectors $u > 0$ and $v \ge 0$, satisfying either \eqref{eqrt} or \eqref{eqlt}.
		\item \texttt{obj.left}: logical flag indicating the triplet type,
		\item \texttt{obj.n}: matrix dimension.
	\end{itemize}
    An object is created as follows:
	\begin{lstlisting}
		obj = mmatrix(M, u, v) % default: left=false
		obj = mmatrix(M, u, v, left=true)
	\end{lstlisting}
    During construction, diagonal entries are recomputed from the triplet relation to avoid subtractive cancellation; the diagonal of the constructor parameter \lstinline{M} is ignored.

If \texttt{v} is not provided, it is computed as $Mu$ or $u^TM$ depending on
the triplet type. This computation of $v$ is componentwise accurate only when $M$ is diagonal; otherwise, it is carried out using standard arithmetic, with a warning indicating that componentwise accuracy may be lost.

\subsection{Main functions}
The \texttt{mmatrix} class overloads standard \matlab{} operations to operate
directly on M-matrices with triplet representation while preserving componentwise accuracy. The main functions are summarized in Table~\ref{tab2}.
	\begin{table}[htbp]
	\begin{center}
		\begin{tabularx}{\textwidth}{>{\ttfamily}l X}
			\hline
			Function & Description \\
			\hline
			\lstinline!double!(A) & Return the full matrix $A$ in double precision. \\
			\lstinline!vpa!(A,digits) & Return a variable-precision representation of $A$. \\
			\lstinline!lu!(A) & Compute an LU factorization with \texttt{dmmtrff} or \texttt{dmltrff}.\\
			\lstinline!mldivide!(A,B) & Solve $A \backslash B$ accurately when $B \ge 0$. \\
			\lstinline!mrdivide!(B,A) & Solve $B / A$ accurately when $B \ge 0$. \\
			\lstinline!inv!(A) & Compute $A^{-1}$ with reduced cancellation error. \\
            \lstinline!svd!(A) & Compute the SVD of $A$.
            \\
			\lstinline!det!(A) & Compute the determinant of $A$. \\
			\lstinline!sqrtm!(A) & Compute a componentwise accurate matrix square root. \\
			\lstinline!sda!(M,n1,opts) & Solve the MARE~\eqref{mares} using the structured doubling algorithm, where $M$ as in~\eqref{mareM} is an M-matrix, and \texttt{n1} denotes the dimension of the $(1,1)$-block $B$.  \\
			\lstinline!schur!(A,p,q) & Compute Schur complement (as an \mmatrix{}).
            \\
            \lstinline!eigmin!(A,tol,maxit) & Compute the smallest eigenvalue of $A$\\
			\hline
		\end{tabularx}
	\end{center}
			\caption{List of implemented functions in \mmatrix-toolbox}
	\label{tab2}
	\end{table}
	These methods enable standard linear algebra workflows while preserving the
	structure and numerical properties of triplet M-matrices.
	\subsubsection*{Overloading Operators}
	Standard \matlab{} operators are overloaded to support \texttt{mmatrix}
	objects:
    \begin{itemize}
 \item[\textbf{(\texttt{()})}] The syntax \lstinline{A(I,J)} is supported to return a submatrix. Here \lstinline{I,J} can be numbers, vectors of indices (e.g., \lstinline{1:5}), logical vectors (e.g., \lstinline{A.u<1}), expressions using the \matlab{} keyword \lstinline{end} (e.g., \lstinline{1:end-1}). If the matrix is a principal submatrix, it is returned as an \lstinline{mmatrix} object, otherwise as a \lstinline{double}.
 \item[\textbf{(\texttt{+})}]
If $A=(M_1,u,v_1)$ and $B=(M_2,u,v_2)$ are two \lstinline{mmatrix} objects with the same triplet vector $u$ and the same direction (left or right), then their sum is the \lstinline{mmatrix} object $A + B = (M_1 + M_2,\; u,\; v_1 + v_2)$. The operator also supports the addition of a nonnegative diagonal matrix. In all other cases, the sum is computed using standard arithmetic, with a warning that componentwise accuracy may be lost.
\item[\textbf{(\texttt{-})}]
The difference of two M-matrices is generally not an M-matrix. For this reason, subtraction is not implemented within the triplet representation. The operands are converted to double precision and the subtraction is computed as $\text{\lstinline!double!}(A)-\text{\lstinline!double!}(B)$.
\item[\textbf{(\texttt{*}}, \textbf{\texttt{.*})}]
Multiplication by a positive scalar $\alpha>0$ preserves the triplet representation and is performed through the update $(M,u,v)\mapsto (\alpha M,u,\alpha v)$.

The class also supports multiplication by a diagonal matrix $D=\operatorname{diag}(a)$ with $a_i>0$.

For a right triplet $Au=v$, left scaling gives $(DA)u=Dv$, and is represented by $(DM,u,Dv)$.

Similarly, right scaling gives $(AD)(D^{-1}u)=v$, hence the updated triplet becomes $(MD,D^{-1}u,v)$. Analogous formulas are used for left triplets.

If positivity is not preserved, the operation falls back to standard floating-point arithmetic.
\item[\textbf{(\texttt{.'}}, \textbf{\texttt{'})}]
The transpose is implemented by replacing $(M,u,v)$ with $(M^{T},u^{T},v^{T})$.
\item[\textbf{(\texttt{\^{}})}]
The only matrix power implemented in triplet form is the square root $A^{1/2}$, computed using \lstinline!sqrtm!\texttt{(.)}. For all other exponents, the matrix is converted to double precision and \matlab{}'s built-in matrix power routine is used.
\item[\texttt{(diag(.)})]
The diagonal entries are reconstructed from the triplet relation and returned by the \lstinline!diag! function.
\item[(\texttt{norm(.)})]
Norms are evaluated using \matlab{}'s \texttt{norm}: $\|A\|_p=\|\text{\lstinline!double!}(A)\|_p$. 
\end{itemize}
\subsection{The \texttt{gettriplet} function}
The function \texttt{gettriplet} is used to generate test M-matrices in triplet form as \lstinline{mmatrix} objects for numerical experiments, similar to \matlab{}'s \lstinline{gallery}. It uses \matlab{} name-value inputs such as \texttt{size} (default $n=10$) and \texttt{left} (to choose between right or left triplets).

Two main test cases are available:
\begin{itemize}
	\item \lstinline!'rand'!: Generates random M-matrix with random off-diagonal entries and random triplet vectors $u$ and $v$.
	\item \lstinline!'fem'!: Returns a tridiagonal M-matrix arising from a finite element discretization with step size $h$ (default $10^{-2}$).  The output \texttt{out} also contains the right-hand side vector $b$.
\end{itemize}

An example of LU factorization for a random case is:
\begin{lstlisting}
	A = gettriplet('rand', size=n, left=1); % right triplet: left=0
	[L, U] = lu(A); % Computed with componentwise accuracy
\end{lstlisting}

Similarly, for the FEM test case, the associated linear system $Ax = b$ can be solved as:
\begin{lstlisting}
	[A, out] = gettriplet('fem', size=n, h=1e-3);
	x = A \ out.b; 
\end{lstlisting}		
	\section{Numerical experiments}\label{sec5}
This section evaluates the accuracy and performance of the \mmatrix-toolbox on several test problems. 
All experiments were performed using \matlab{} R2025b on a
computer equipped with a 64-bit Intel Core i7-13620H processor. To assess numerical accuracy, we use both componentwise and normwise errors. 
Let $x, x^\ast \in \mathbb{R}^n$ and $X, X^\ast \in \mathbb{R}^{m \times n}$, where $x^\ast$ and $X^\ast$ denote either the exact solution or a high-precision solution computed via \texttt{vpa}. We define the componentwise and normwise errors by
	\[
	e_c := e(x, x^\ast) = \max_i \frac{|x_i - x_i^\ast|}{|x_i^\ast|}, 
	\qquad 
	e_n := \frac{{\|x - x^\ast\|}_2}{{\|x^\ast\|}_2},
	\]
	and
	\[
	E_c := e(X, X^\ast)=\max_{i,j} \frac{|X_{ij} - X^*_{ij}|}{|X^*_{ij}|}, 
	\qquad 
	E_n := \frac{{\|X - X^\ast\|}_F}{{\|X^\ast\|}_F}.
	\]
We compare the proposed GTH-based algorithms against standard \matlab{} functions (\texttt{lu}, \texttt{inv}, \texttt{eig}, \texttt{svd}) in terms of both componentwise accuracy and execution time.
	\begin{example}\label{ex1}
We consider a linear system of the form $A x = b$,
arising from a one-dimensional finite element discretization.  The matrix $A \in \mathbb{R}^{n \times n}$ is given by
\[
A =
\begin{bmatrix}
    1 + h\sigma & -1 &        &        &        \\
    -1          & 2  & -1     &        &        \\
    & \ddots & \ddots & \ddots &   \\
    &        & -1     & 2      & -1 \\
    &        &        & -1     & 1 + h\sigma
\end{bmatrix},
\]
where $h = \frac{1}{n+1}$ is the grid spacing.
This matrix has a triplet representation with
$u = \mathbf{1}$ and $v = (h\sigma, 0, \dots, 0, h\sigma)^T$. Since $A^T=A$, we have also $u^TA=v^T$. The right-hand side vector $b$ is of the form $
b_i = h \sin(\pi h i)$, $i = 1, \dots, n$, and $b_1=b_1/2, b_n=b_n/2$.
The matrix size $n$ and the parameter $\sigma$ are indicated in the figures. 

Figure \ref{fig1} shows that all GTH-based implementations preserve near machine-precision accuracy in both the normwise and componentwise errors, while the standard LU factorization exhibits large componentwise inaccuracies as the problem becomes more ill-conditioned. 

Table \ref{tab3} reports the average CPU times for the different Fortran implementations together with the pure \matlab{} GTH implementation and the standard LU factorization. The blocked and recursive variants significantly improve computational time compared with the unblocked implementation, especially for large matrix dimensions.
		\begin{figure}[!htbp]
			\centering
			\input{fig1}
			\caption{Example \ref{ex1}: Accuracy comparisons for $n=512$. GTH denotes the pure \matlab{} implementation of the GTH algorithm with the right triplet representation, whereas LU refers to \matlab’s built-in LU factorization routine based on \texttt{dgetrf}.}
            \label{fig1}
		\end{figure}
	\begin{table}[htbp]
		\centering
		\caption{Example \ref{ex1}: Average CPU time (seconds) over 10 runs for both the computation of the LU factors by each method and the solution of the finite element linear system with $\sigma = \num{e-09}$. The symbol ``--'' indicates that the method failed to compute the solution within 300 seconds.}
		\resizebox{\textwidth}{!}{ 
			\begin{tabular}{l c c c c c c}
				\toprule
				{Methods}\quad\textbackslash\,\,\, Dimension $n$: &1000&4000& 5000&8000&10000&15000\\
				\midrule
				\matlab's \lstinline{lu}& 0.0079 & 0.2189 & 0.3951 & 1.3394 & 2.5363 &8.0799\\
				\midrule
				\multicolumn{7}{l}{\textit{Right‑triplet algorithms:}} \\
				GTH (pure \matlab{}) & 1.1245 & 90.7768 & 178.2144 & -- & -- &--\\
				\lstinline{dmmtf2} (unblocked) & 0.0249 & 7.5489& 16.5645 & 76.8099 & 154.4136& --\\
				\lstinline{dmmtrf2} (recursive)& 0.0074 & 0.1672 & 0.3028 & 1.1287 & 2.1804&7.0821 \\
				\lstinline{dmmtrf} (blocked)& 0.0069 & 0.1894 & 0.3341 & 1.1841 & 2.2801 &7.3640\\
				\lstinline{dmmtrff} (blocked recursive) & 0.0077 & 0.1691 & 0.3067 & 1.1801 & 2.2202&7.2710 \\
				\midrule
				\multicolumn{7}{l}{\textit{Left‑triplet algorithms:}} \\
				GTH (pure \matlab{})& 1.1330 & 90.9462 & 178.2964 & -- & -- &--\\
				\lstinline{dmltf2} (unblocked)& 0.0225 & 7.5124 & 16.4713 & 76.4553 & 153.7987& --\\
				\lstinline{dmltrf2} (recursive)& 0.0063 & 0.1859 & 0.3148 & 1.1925 & 2.2564&7.3533 \\
				\lstinline{dmltrf} (blocked)& 0.0068 & 0.1966 & 0.3370 & 1.2187 & 2.2434&7.3828 \\
				\lstinline{dmltrff} (blocked recursive)& 0.0065 & 0.1724 & 0.3147 & 1.1975 & 2.2044 &7.2814\\
				\bottomrule
		\end{tabular}}  
		\label{tab3}
	\end{table}
		\end{example}
\begin{example}\cite[Example 1]{alfa2002accurate}\label{ex3} 
This example evaluates the accuracy and efficiency of \lstinline{eigmin} compared to \matlab’s \lstinline{eig}, for an ill-conditioned M-matrix. 
Let $\hat{\lambda}$ and $\tilde{\lambda}$ denote the approximations obtained by \lstinline{eigmin} and \lstinline{min(eig(A))}, respectively. We consider the $n \times n$ matrix $A = I - P$, where 
\[
P =
\begin{pmatrix}
    0 & 1 & 0 & \cdots & 0 \\
    0 & 0 & 1 & \cdots & 0 \\
    \vdots & \vdots & \vdots & \ddots & \vdots \\
    0 & 0 & 0 & \cdots & 1 \\
    \delta & 0 & 0 & \cdots & 0
\end{pmatrix}.
\]
The triplet representation of $A$ is defined by the vectors $u = \mathbf{1}$ and $v = (0, 0, \dots, 0, 1-\delta)^T$, such that $Au = v$. The exact smallest eigenvalue is given by $\lambda = 1 - \delta^{1/n}$. As $\delta \to 0$, the eigenvalue $\lambda$ approaches $1$, and the matrix becomes ill-conditioned.
	
Table~\ref{tab_3} shows the results for $n=100$ as $\delta$ varies. The \lstinline{eigmin} function maintains high componentwise accuracy even when $\delta = 10^{-30}$. In contrast, \matlab’s \lstinline{eig} fails significantly once $\delta < 10^{-18}$, yielding relative errors $e(\lambda, \tilde{\lambda}) > 1$. This failure arises from subtractive cancellation: forming $A = I - P$ in floating-point arithmetic leads to the loss of the tiny difference $1 - \delta^{1/n}$ that defines the eigenvalue.

Table~\ref{tab4} fixes $\delta = 10^{-9}$ and varies the dimension $n$. While \lstinline{eigmin} remains robust with errors near machine precision, the accuracy of \lstinline{eig} degrades by several orders of magnitude. Furthermore, \lstinline{eigmin} is faster than \lstinline{eig} for $n \ge 500$.
	\begin{table}[htbp]
		\centering
		\caption{
        Example~\ref{ex3}: Accuracy comparison for $n=100$ with varying $\delta$ ($\texttt{tol}=\num{e-16}$, $\texttt{maxit}=1000$).
        }
		\begin{tabular}{c c c c}
			\toprule
			$\delta$ & $\lambda$ & $e(\lambda,\hat{\lambda})$ &  $e(\lambda,\tilde{\lambda})$  \\
			\midrule
			\num{e-03} &\num{ 6.67e-02} &\num{ 1.46e-15} &\num{ 7.65e-14} \\
			\num{e-06} &\num{ 1.29e-01} &\num{ 2.15e-16} &\num{ 5.47e-12} \\
	    	\num{e-12} &\num{ 2.41e-01} &\num{ 3.45e-16} &\num{ 7.74e-10} \\
			\num{e-18} &\num{ 3.39e-01} &\num{ 7.69e-15} &\num{ 1.95e+00} \\
			\num{e-24} &\num{ 4.25e-01} &\num{ 1.70e-15} &\num{ 1.36e+00} \\
			\num{e-30} &\num{ 4.99e-01} &\num{ 1.11e-16} &\num{ 1.00e+00} \\
			\bottomrule
		\end{tabular}
        \label{tab_3}
	\end{table}
\begin{table}[htbp]
	\centering
	\caption{Example \ref{ex3}: Accuracy and CPU-time comparison for $\delta = \num{e-09}$  with varying $n$ ($\texttt{tol}=\num{e-16}$, $\texttt{maxit}=1000$).}
	\begin{tabular}{c c c c c c c}
			\toprule
		$n$ & $\lambda$ & $e(\lambda,\hat{\lambda})$ & {CPU-time} && $e(\lambda,\tilde{\lambda})$ & {CPU-time} \\
		\midrule
		\num{10}   & \num{8.74e-01} & \num{2.54e-16} & \num{0.0002} && \num{1.18e-09} & \num{0.0001} \\
		\num{50}   & \num{3.39e-01} & \num{1.31e-15} & \num{0.0008} && \num{1.34e-08} & \num{0.0004} \\
		\num{100}  & \num{1.87e-01} & \num{7.41e-16} & \num{0.0014} && \num{9.17e-09} & \num{0.0039} \\
		\num{500}  & \num{4.06e-02} & \num{1.88e-15} & \num{0.0226} && \num{1.05e-08} & \num{0.1539} \\
		\num{1000} & \num{2.05e-02} & \num{4.23e-15} & \num{0.1148} && \num{8.21e-09} & \num{0.5234} \\
		\num{5000} & \num{4.14e-03} & \num{2.03e-14} & \num{4.9531} && \num{1.97e-07} & \num{30.1934} \\
		\bottomrule
	\end{tabular}
    \label{tab4}
\end{table}	
\end{example}
\begin{example}\label{ex2}
We compare three approaches for inverting the matrix from Example~\ref{ex1}: the standard \matlab\ \lstinline{inv} function, the \lstinline{inv} method from the \mmatrix-toolbox, and a Fortran implementation \lstinline{inv}* that computes LU factors with \lstinline{dmmtrff} and then solves 
$AX=I$ for the inverse.
Here, the exact inverse of the matrix $A$ was computed using \lstinline{vpa}.

Figure~\ref{fig3} illustrates normwise and componentwise errors versus $\sigma$ (decreasing from left to right). For \matlab's \lstinline{inv}, both $E_n$ and $E_c$ grow as $\sigma$ decreases, reaching about $10^{-3}$ for the smallest $\sigma$ which follows the condition number bound $\operatorname{cond}(A)\epsilon$. In contrast, the GTH‑based \lstinline{inv} and \lstinline{inv}* keep both $E_n$ and $E_c$ close to machine precision.

Table~\ref{tab_time} shows that \matlab's \lstinline{inv} is slightly faster than the other two algorithms. However, the GTH-based \lstinline{inv} and \lstinline{inv}* achieve accuracy near the machine precision.
\begin{figure}[htbp]
		\centering
	 	\input{fig3}
		\caption{Example \ref{ex2}: Results for matrix inversion with $n=128$.}
        \label{fig3}
\end{figure}
\begin{table}[htbp]
    \centering
     \caption{Example \ref{ex2}: Comparison of CPU times (s) for three matrix inversion methods applied to triplet M-matrices of different dimensions and $\sigma=10^{-9}$.}	
     \label{tab_time}
    \begin{tabular}{c c c c}        
        \toprule
         $n$ & \protect{{\matlab{}'s}} \texttt{inv} & \texttt{inv} & \texttt{inv}* \\
         &&&\\
        \midrule
1000&0.0319&0.0231&0.0249\\
3000&0.2682&0.2828&0.2782\\
6000&1.4266&1.7516&1.9767\\
9000&4.1958&5.7934&6.5056\\
12000&9.9581&13.8217&15.5551\\
15000&26.8983&29.0633&31.2356\\
        \bottomrule
    \end{tabular}
\end{table}
\end{example}
\begin{example}\label{ex4}
This example evaluates the accuracy of the singular value decomposition (SVD) computed by the \mmatrix-toolbox for ill-conditioned M-matrices. Following the construction in \cite{DemmelKoev2004}, we generate random M-matrices from their off-diagonal entries and the vectors $u$ and $v$ as follows.

For a given dimension $n \in \{20, 30\}$, the off-diagonal entries $a_{ij}$ ($i \neq j$) are drawn uniformly from $[-1,0]$. The vector $v$ is constructed by setting each component $v_i = r \cdot 10^k$, where $r \in [0,1]$ is uniformly random and $k \in [-40,-20]$ is an integer. Then, for each row $i$, both the off-diagonal entries $a_{ij}$ ($j \neq i$) and $v_i$ are multiplied by a factor $r \cdot 10^k$, where $r \in [0,1]$ and $k \in [-100,100]$ is an integer.

We consider two cases for the vector $u$:
\begin{itemize}
    \item a well-conditioned case: \lstinline{u = rand(n,1) .* exp(randn(n,1))};
    \item an ill-conditioned case: \lstinline{u = rand(n,1) .* exp(15 * randn(n,1))}.
\end{itemize}

Figure~\ref{fig4} shows the computed singular values for both dimensions $n = 20$ and $n = 30$. The results
obtained by SVD from \mmatrix-toolbox are close to the high-precision SVD (\lstinline{vpa}).
In contrast, the standard \matlab{} \lstinline{svd} loses accuracy for singular values below the  line 
$\sigma_1 \epsilon$.
\begin{figure}[htbp]
		\centerline{\input{figsvd}}
		\caption{Example \ref{ex4}: Singular values for random M-matrices of size $n = 20$(left) and $n = 30$(right).}
		\label{fig4}
	\end{figure}
\end{example}
\begin{example}\label{ex_svd2}
This example provides a second test for the SVD routines using the M-matrix and its triplets from Example~\ref{ex1}.

Table~\ref{tab_svd2} shows that the standard \matlab{} SVD loses componentwise accuracy
as the condition number increases. 
In contrast, the SVD algorithm implemented in the \mmatrix-toolbox maintains componentwise
errors close to machine precision. Moreover, since the normwise error is dominated by the largest singular values, $e_n$ stays near machine precision for both methods and hides the errors in the smaller singular values.
\begin{table}[htbp]
	\centering
	\caption{Example \ref{ex_svd2}: Comparison of componentwise and normwise errors for the vector of all singular values produced by \matlab's SVD and the \mmatrix-toolbox.}
	\begin{tabular}{ccccccc}
	\toprule
	& & & \multicolumn{2}{c}{\matlab's SVD} & \multicolumn{2}{c}{SVD} \\
	\cmidrule(lr){4-5} \cmidrule(lr){6-7}
	$\sigma$ & $n$ & $\mathrm{cond}(A)$ & $e_c$ & $e_n$ & $e_c$ & $e_n$ \\
	\midrule
	\num{e-3} & 16 & \num{5.39e5} & \num{2.31e-12} & \num{5.02e-16} & \num{8.97e-16} & \num{4.48e-16} \\
	& 32 & \num{2.11e6} & \num{1.04e-10} & \num{3.49e-16} & \num{2.47e-15} & \num{8.58e-16} \\
	& 64 & \num{8.32e6} & \num{1.83e-10} & \num{3.14e-16} & \num{2.52e-15} & \num{6.82e-16} \\
	& 128 & \num{3.30e7} & \num{4.72e-10} & \num{2.02e-16} & \num{4.93e-15} & \num{1.16e-15} \\
	\midrule
	\num{e-9} & 16 & \num{5.39e11} & \num{1.40e-6} & \num{5.56e-16} & \num{1.54e-15} & \num{3.56e-16} \\
	& 32 & \num{2.11e12} & \num{7.68e-5} & \num{2.93e-16} & \num{1.92e-15} & \num{9.45e-16} \\
	& 64 & \num{8.32e12} & \num{7.56e-5} & \num{2.72e-16} & \num{2.73e-15} & \num{1.02e-15} \\
	& 128 & \num{3.29e13} & \num{2.51e-3} & \num{3.74e-16} & \num{4.16e-15} & \num{1.19e-15} \\
	\midrule
	\num{e-12} & 16 & \num{5.42e14} & \num{6.52e-3} & \num{4.42e-16} & \num{1.93e-15} & \num{3.48e-16} \\
	& 32 & \num{2.17e15} & \num{2.98e-2} & \num{3.52e-16} & \num{2.29e-15} & \num{6.22e-16} \\
	& 64 & \num{5.22e15} & \num{2.53} & \num{2.04e-16} & \num{3.69e-15} & \num{9.03e-16} \\
	& 128 & \num{3.27e16} & \num{13.21} & \num{1.95e-16} & \num{4.16e-15} & \num{1.04e-15} \\
	\bottomrule
	\end{tabular}
    \label{tab_svd2}
\end{table}
\end{example}
\begin{example}\label{ex5}
 This example illustrates the accuracy of the \lstinline{sqrtm} function implemented in the \mmatrix-toolbox for computing the matrix square root of a triplet M-matrix. We consider the  matrix $A$ with its triplets from Example~\ref{ex1} where $n = 32$.

Figure~\ref{fig5} reports the normwise and componentwise relative errors obtained by \lstinline{sqrtm} from the \mmatrix-toolbox with $\texttt{maxit}=40$. The results show that the algorithm achieves componentwise accuracy for all values of $\sigma$, with both $E_n$ and $E_c$ remaining close to machine precision. For comparison, we also report  $\text{cond}(\sqrt{\cdot};A)\epsilon$, where $\text{cond}(\sqrt{\cdot};A)$ denotes the condition number of the matrix square root function at $A$ in the Frobenius norm. As one can see the errors obtained by \matlab's \lstinline{sqrtm} function (based on the Schur form of $A$) follow the conditioning of the matrix function while \mmatrix-toolbox's \lstinline{sqrtm} get componentwise accuracy also for matrices with ill-conditioned square root.
\begin{figure}[htbp]
\centering
\input{figsqrtm}
\caption{Example~\ref{ex5}: Accuracy comparison for \texttt{sqrtm}.}
\label{fig5}
\end{figure}
\end{example}
	
\section{Conclusions}\label{sec6}

We have presented the \mmatrix-toolbox, a software package for accurate computation in problems involving M-matrices with triplet representation.

The algorithms based on this technology provide striking numerical results, getting full componentwise accuracy beyond the theory of conditioning and timing comparable to \lapack{} implementations. The theoretical fundations are perturbation results, which we collect in a systematic way.

The price to pay is that one needs the triplet representation, but the theory guarantees that it exists under mild assumptions, while in practice, in most problems (Markov chains, complex networks, \ldots), it is known or easily obtained.

The package is easy-to-use due to the \matlab{} interface, efficient due to the Fortran implementation of the core routines, and flexible enough to be extended with future algorithms.

\appendix

\section{On Theorem~\ref{thm:componentwise perturbation bounds}}
A result that can be used to prove Theorem~\ref{thm:componentwise perturbation bounds} is the following.
\begin{lemma} \label{lem:matrixtree}
    Consider an M-matrix $B\in\mathbb{R}^{n\times n}$ with right triplet $(B, \mathbf{1}, v)$. Define the nonnegative quantities $w_{i,j} = -B_{i,j}$ for each $i,j=1,\dots,n$ with $i\neq j$, and $w_{i,0} = v_i$ for each $i=1,\dots,n$. Then, 
    \begin{enumerate}
        \item $\det(B)$ is the sum of nonnegative terms of the form
        \begin{equation} \label{detBterm}
        w_{1,\tau_1} w_{2,\tau_2} \dotsm w_{n,\tau_n},    
        \end{equation}
        where $\tau_i \in \{0,1,\dots,n\} \setminus \{i\}$ for each $i$.
        \item For each $i,j$, $\adj(B)_{ij}$ is the sum of nonnegative terms of the form
        \begin{equation} \label{eq:AdjBijterm}
        w_{1,\tau_1} w_{2,\tau_2} \dotsm w_{j-1,\tau_{j-1}}w_{j+1,\tau_{j+1}} \dotsm    w_{n,\tau_n}    
        \end{equation}
        (without the factor $w_{j,\tau_j}$).
    \end{enumerate}
\end{lemma}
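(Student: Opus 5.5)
The key observation is that $B$ with right triplet $(B,\mathbf 1,v)$ satisfies $B_{ii} = v_i + \sum_{j\neq i} w_{i,j} = \sum_{k\in\{0,\dots,n\}\setminus\{i\}} w_{i,k}$. So every entry of $B$ is a polynomial in the $w$'s: $B_{ii} = \sum_{k\ne i} w_{i,k}$ and $B_{ij} = -w_{i,j}$. Equivalently, $B$ is the $n\times n$ principal submatrix, obtained by deleting the row and column of vertex $0$, of the weighted Laplacian $\mathcal L = D - W$ of the directed graph on vertices $\{0,1,\dots,n\}$. Here the edge $i\to k$ has weight $w_{i,k}$, vertex $0$ has no outgoing edges, and $D$ holds the out-degrees. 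The plan is to invoke the (all-minors) directed matrix-tree theorem in its out-degree form.

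\textbf{Part 1.} By the directed matrix-tree theorem, the minor obtained by deleting row and column $0$ equals the sum, over all spanning in-forests rooted at $0$, of the product of their edge weights. Each such forest is a spanning arborescence oriented toward $0$: every vertex $i\ge1$ has exactly one outgoing edge $i\to\tau_i$ with $\tau_i\ne i$, and there are no cycles. This gives exactly the terms~\eqref{detBterm}, with nonnegative coefficients (each coefficient is $1$, and only acyclic choices of $\tau$ occur).

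To keep the argument self-contained, I would rather prove this directly. Expand $\det B$ by multilinearity in each row, writing row $i$ as $\sum_{k\ne i} w_{i,k}(e_i - e_k)^T$, with the convention $e_0 = 0$. Then
\[
\det B = \sum_{\tau} \prod_i w_{i,\tau_i} \, \det\bigl[e_i - e_{\tau_i}\bigr]_{i}.
\]
It remains to show that $\det[e_i-e_{\tau_i}]_i$ is $1$ when the functional graph $i\mapsto\tau_i$ reaches $0$ from every vertex, and $0$ when it contains a cycle. In the cyclic case, the rows indexed by the cycle sum to zero. In the acyclic case, order the vertices by distance to $0$; the matrix is then triangular with unit diagonal.

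\textbf{Part 2.} $\adj(B)_{ij} = (-1)^{i+j}\det B^{(j,i)}$, where $B^{(j,i)}$ denotes $B$ with row $j$ and column $i$ deleted. The cleanest route is again multilinearity. Replace row $j$ of $B$ by $e_i^T$; then $\det$ of the resulting matrix equals $\adj(B)_{ij}$. Expand the remaining rows $r\ne j$ as before to obtain
\[
\adj(B)_{ij}=\sum_{\tau} \prod_{r\ne j} w_{r,\tau_r}\,\det M_\tau ,
\]
where $M_\tau$ has rows $e_r-e_{\tau_r}$ for $r\ne j$ and $e_i$ in row $j$. Row $j$ now acts as a ``sink'' and carries no weight factor, so every term has the form~\eqref{eq:AdjBijterm}.

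The main obstacle is the sign: showing $\det M_\tau\in\{0,1\}$. Adding row $j$ (namely $e_i$) into appropriate rows, $M_\tau$ behaves like the previous matrix for the graph in which $j$ points to $i$ and vertex $j$'s own out-edge is absent. Concretely, I would argue as follows. If some cycle avoids $j$, its rows sum to zero, so $\det M_\tau = 0$. Otherwise every vertex's path reaches either $0$ or $j$. Let $S$ be the set of vertices whose path reaches $j$. If $i\notin S\cup\{j\}$ fails, the contribution vanishes; if it holds, the determinant equals $1$. The cleanest check is to note that $\det M_\tau$ is an entry of the inverse of a unit triangular matrix of the type in Part 1. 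Indeed, for $\tau$ acyclic, $N_\tau=[e_r-e_{\tau_r}]$ satisfies $N_\tau^{-1}=(I-P_\tau)^{-1}=\sum_k P_\tau^k$, which is a $0/1$ matrix with nonnegative entries. Since $\adj(B)$ is obtained from these $N_\tau^{-1}$ entries by Cramer's rule, this confirms nonnegativity. This is the step I expect to need the most care.
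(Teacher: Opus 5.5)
Your argument is correct in substance, but it takes a different route from the paper. The paper gives no proof of its own: it refers to \cite[Theorem~3.1]{NKP} and to the recursive cofactor formulas of \cite{alfa}, and notes that the lemma is a special case of the all-minors matrix tree theorem \cite{chaiken}. Your opening paragraph is that same citation route. Your multilinear expansion of each row as $\sum_{k\neq i} w_{i,k}(e_i-e_k)^T$ (with $e_0=0$) is instead a self-contained alternative, essentially the standard proof of the directed matrix-tree theorem specialised to this setting. The citation is shorter and gives the spanning-forest interpretation of every minor at once. Your route shows that all coefficients are $0$ or $1$ and characterizes the surviving $\tau$: acyclic maps for $\det B$, and two-tree forests rooted at $\{0,j\}$ with $i$ in the tree of $j$ for $\adj(B)_{ij}$. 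As a by-product, it shows directly that every surviving term of $\det B$ contains a factor $w_{k,0}$, which the paper uses in the remark after Theorem~\ref{thm:generalized componentwise perturbation bounds}.

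One sentence in Part~2 is wrong as written: the condition on $S$ is reversed. When no cycle avoids $j$, $\det M_\tau=1$ exactly when $i\in S\cup\{j\}$, and $\det M_\tau=0$ otherwise. Compare the paper's example, where every term of $\adj(B)_{21}$ has $\tau_2=1$ or $(\tau_2,\tau_3)=(3,1)$. The ``$j$ points to $i$'' heuristic is not needed and can be dropped.

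Make the final step explicit instead. Set $\tau'_j:=0$ and $\tau'_r:=\tau_r$ for $r\neq j$. Then $M_\tau$ is $N_{\tau'}$ with row $j$ replaced by $e_i^T$, so expanding along row $j$ gives $\det M_\tau=\adj(N_{\tau'})_{ij}$. If $\tau'$ has a cycle, that cycle avoids $j$, and your row-sum argument gives $\det M_\tau=0$. Otherwise $\det N_{\tau'}=1$ and $\det M_\tau=(N_{\tau'}^{-1})_{ij}=\sum_{k\ge 0}(P_{\tau'}^k)_{ij}$. Since the $\tau'$-path from $i$ is simple, this sum is $1$ if the path visits $j$ and $0$ otherwise. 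Written this way, ``obtained by Cramer's rule'' becomes an actual step, and Part~2 is complete.
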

Here, $\adj(B)$ is the adjugate matrix of $B$, i.e., the unique matrix such that $B \adj(B) = \adj(B) B = \det(B) I$.

Lemma~\ref{lem:matrixtree} appears in this form in~\cite[Theorem~3.1]{NKP}, but equivalent results using recursive expressions for cofactors are in~\cite[Section~2]{alfa}. It is also noted in~\cite{NKP} that the lemma is a special case of a more general combinatorial result (all-minors matrix tree theorem, \cite{chaiken}), and that therefore the terms have a combinatorial interpretation in terms of spanning trees in the complete directed graph on vertices $\{0,1,\dots,n\}$ with edge weights $w_{i,j}$. Here, we only report an illustrative example from~\cite{NKP}.

\begin{example}
    Let $n=3$, so that
    \[
    B = \begin{bmatrix}
            w_{1,0}+w_{1,2}+w_{1,3} & -w_{1,2} & -w_{1,3}\\ -w_{2,1} & w_{2,0}+w_{2,1}+w_{2,3} & -w_{2,3}\\ -w_{3,1} & -w_{3,2} & w_{3,0}+w_{3,1}+w_{3,2} 
    \end{bmatrix}.
    \]
    Then, 
    \begin{align*}
    \det B &=     w_{1,0}\,w_{2,0}\,w_{3,0}+w_{1,0}\,w_{2,0}\,w_{3,1}+w_{1,0}\,w_{2,1}\,w_{3,0}+w_{1,0}\,w_{2,0}\,w_{3,2} \\&+ w_{1,0}\,w_{2,1}\,w_{3,1} +w_{1,2}\,w_{2,0}\,w_{3,0}+w_{1,0}\,w_{2,1}\,w_{3,2}+w_{1,0}\,w_{2,3}\,w_{3,0}\\&+w_{1,2}\,w_{2,0}\,w_{3,1}+w_{1,3}\,w_{2,0}\,w_{3,0}+w_{1,0}\,w_{2,3}\,w_{3,1}+w_{1,2}\,w_{2,0}\,w_{3,2}\\&+w_{1,3}\,w_{2,1}\,w_{3,0}+w_{1,2}\,w_{2,3}\,w_{3,0}+w_{1,3}\,w_{2,0}\,w_{3,2}+w_{1,3}\,w_{2,3}\,w_{3,0},
    \end{align*}
    and
    \begin{align*}
    \adj(B)_{21} &= - \det \begin{bmatrix}
        -w_{2,1} & -w_{2,3}\\ -w_{3,1} &  w_{3,0}+w_{3,1}+w_{3,2}
    \end{bmatrix}\\ &= w_{2,1}\,w_{3,0}+w_{2,1}\,w_{3,1}+w_{2,1}\,w_{3,2}+w_{2,3}\,w_{3,1}.    
    \end{align*}
\end{example}
With this lemma, we prove Theorem~\ref{thm:componentwise perturbation bounds}, using mostly ideas from~\cite{alfa}.
\begin{proof}[Proof of Theorem~\ref{thm:componentwise perturbation bounds}]
It is enough to prove the result for right triplets; the arguments for left triplets are analogous, or follow immediately by transposing.
    \begin{enumerate}        
        \item The assumptions imply for all $i$ and $j\neq i$
        \[
        (1-\varepsilon)v_i \leq \tilde{v}_i \leq (1+\varepsilon)v_i, \quad (1-\varepsilon)(-A_{ij}) \leq (-\tilde{A}_{ij}) \leq (1+\varepsilon)(-A_{ij}).
        \]
        Using formula~\eqref{eq:1}, we get
        \begin{align*}
        \tilde{A}_{ii} &= \frac{\tilde{v}_i - \sum_{j\neq i} \tilde{A}_{ij}u_j}{u_{i}} \leq
        \frac{(1+\varepsilon)v_i + \sum_{j\neq i} (1+\varepsilon)(-A_{ij})u_j}{u_{i}} = (1+\varepsilon)A_{ii},    \\
        \tilde{A}_{ii} &= \frac{\tilde{v}_i - \sum_{j\neq i} \tilde{A}_{ij}u_j}{u_{i}} \geq
        \frac{(1-\varepsilon)v_i + \sum_{j\neq i} (1-\varepsilon)(-A_{ij})u_j}{u_{i}} = (1-\varepsilon)A_{ii}.
        \end{align*}
        \item We define $B = A \operatorname{diag}(u)$, which satisfies $B\boldsymbol{1}=v$, and $\tilde{B}$ analogously. The assumptions imply
        \[
        (1-\varepsilon)w_{i,j} \leq \tilde{w}_{i,j} \leq (1+\varepsilon) w_{i,j},
        \]
        where the nonnegative weights $\tilde{w}_{i,j}$ (equal to either $-\tilde{B}_{ij} = -\tilde{A}_{ij}u_j$ or $\tilde{v}_i$ when $j=0$) are defined as in Lemma~\ref{lem:matrixtree} starting from $\tilde{B}$,         
        so each summand~\eqref{detBterm} of $\det(B)$ satisfies
        \[
        (1-\varepsilon)^n w_{1,\tau_1} w_{2,\tau_2} \dotsm w_{n,\tau_n} \leq
        \tilde{w}_{1,\tau_1} \tilde{w}_{2,\tau_2} \dotsm \tilde{w}_{n,\tau_n}
        \leq
        (1+\varepsilon)^n
        w_{1,\tau_1} w_{2,\tau_2} \dotsm w_{n,\tau_n},
        \]
        and hence 
        \[
        (1-\varepsilon)^n \det(B) \leq \det(\tilde{B}) \leq (1+\varepsilon)^n \det(B).
        \]
        Since $\det(A) = \frac{\det(B)}{\det(\operatorname{diag}(u))}$, the same inequalities apply to $\det(A)$.
        \item The bounds in the previous item ensure that $\det(\tilde{A})\neq 0$ whenever $\det(A)\neq 0$. Arguing as in the previous item with the second part of Lemma~\ref{lem:matrixtree}, we get the inequality
        $(1-\varepsilon)^{n-1}\adj(A)_{ij} \leq \adj(\tilde{A})_{ij} \leq (1+\varepsilon)^{n-1} \adj(A)_{ij}$, since this time each product~\eqref{eq:AdjBijterm} has $n-1$ factors rather than $n$. Now
        \[
        (\tilde{A}^{-1})_{ij} = \frac{\adj(\tilde{A})_{ij}}{\det(\tilde{A})} \leq \frac{(1+\varepsilon)^{n-1}\adj(A)_{ij}}{(1-\varepsilon)^n \det(A)} = \frac{(1+\varepsilon)^{n-1}}{(1-\varepsilon)^n}(A^{-1})_{ij}
        \]
        and analogously for the lower bound.
        \item Note that $A$ and $\tilde{A}$ have the same zero pattern, hence they have the same Frobenius normal form. We reduce without loss of generality to the case in which they are both irreducible. The singular case is easy: if $\lambda=0$, then $v=0$, and we must have $\tilde{v}=0$ as well for $e(\tilde{v},v) \leq \varepsilon$ to hold.
        
        We now deal with the non-singular case. Note that $\tilde{\lambda}^{-1}$ and $\lambda^{-1}$ are the Perron eigenvalues of the non-negative matrices $\tilde{A}^{-1}$ and $A^{-1}$, respectively. The monotonicity of the Perron vector then gives immediately
        \[
        \frac{(1-\varepsilon)^{n-1}}{(1+\varepsilon)^{n}}\lambda^{-1} \leq  \tilde{\lambda}^{-1} \leq \frac{(1+\varepsilon)^{n-1}}{(1-\varepsilon)^{n}}\lambda^{-1},
        \]
        from which the result follows.
        \item If $A$ is singular, $A \adj(A) = 0$, so each column of $\adj(A)$ is a multiple of the Perron eigenvector. Let us take a column, for instance $j=1$, and normalize it to obtain an expression for the entries of $z$:
        \[
        z_{i} = \frac{\adj(A)_{i1}}{\adj(A)_{11}+\dots+\adj(A)_{n1}}.
        \]
        We now argue as in the previous items to get the bound
        \[
        \frac{(1-\varepsilon)^{n-1}}{(1+\varepsilon)^{n-1}} z_i \leq \tilde{z}_i \leq \frac{(1+\varepsilon)^{n-1}}{(1-\varepsilon)^{n-1}}z_i.
        \]
    \end{enumerate}
\end{proof}
This proof strategy extends to a finer bound that allows perturbations of $u$ as well.
\begin{theorem} \label{thm:generalized componentwise perturbation bounds}
    Let $\tilde{A}, A\in\mathbb{R}^{n\times n}$ be M-matrices with (left or right) triplets $(A,u,v)$ and $(\tilde{A}, \tilde{u}, \tilde{v})$. Suppose that $e(-\offdiag(\tilde{A}),-\offdiag(A)) \leq \varepsilon_o$, $e(\tilde{u},u) \leq \varepsilon_u$ and $e(\tilde{v},v) \leq \varepsilon_v$, for some positive constants $\varepsilon_o, \varepsilon_u, \varepsilon_v<1$. Let $\varepsilon_{\max},\varepsilon_{\min}$ be defined so that
    \[
    1+\varepsilon_{\max} = \max(1+\varepsilon_v, (1+\varepsilon_o)(1+\varepsilon_u)), \quad 
    1-\varepsilon_{\min} = \min(1-\varepsilon_v, (1-\varepsilon_o)(1-\varepsilon_u)).
    \]
    Then,
    \begin{enumerate}
        \item \leavevmode \vspace{-1ex}
        \[
        \frac{1-\varepsilon_{\min}}{1+\varepsilon_u}\operatorname{diag}(A) \leq  \operatorname{diag}(\tilde{A}) \leq \frac{1+\varepsilon_{\max}}{1-\varepsilon_u}\operatorname{diag}(A);
        \]
        \item \leavevmode \vspace{-1ex}
        \[\frac{(1-\varepsilon_{\min})^n}{(1+\varepsilon_u)^n}\operatorname{det}(A) \leq  \operatorname{det}(\tilde{A}) \leq \frac{(1+\varepsilon_{\max})^n}{(1-\varepsilon_u)^n} \operatorname{det}(A);
        \]
        \item If $A$ is invertible, then so is $\tilde{A}$, and
        \[
        \frac{(1-\varepsilon_{\min})^{n-1}}{(1+\varepsilon_u)^{n-1}}\frac{(1-\varepsilon_u)^n}{(1+\varepsilon_{\max})^n}A^{-1} \leq  \tilde{A}^{-1} \leq \frac{(1+\varepsilon_{\max})^{n-1}}{(1-\varepsilon_u)^{n-1}} \frac{(1+\varepsilon_u)^{n}}{(1-\varepsilon_{\min})^n}A^{-1};
        \]
        \item If $\tilde{\lambda}$ and $\lambda$ are the Perron eigenvalues of $\tilde{A}$ and $A$, respectively, then
        \[
        \frac{(1-\varepsilon_u)^{n-1}}{(1+\varepsilon_{\max})^{n-1}} \frac{(1-\varepsilon_{\min})^n}{(1+\varepsilon_u)^{n}}
        \lambda \leq  \tilde{\lambda} \leq
        \frac{(1+\varepsilon_u)^{n-1}}{(1-\varepsilon_{\min})^{n-1}}\frac{(1+\varepsilon_{\max})^n}{(1-\varepsilon_u)^n}
        \lambda;
        \]
        \item If $\tilde{A}$ and $A$ are singular irreducible, and $\tilde{z},z\in\mathbb{R}^{n}$  are the positive Perron eigenvectors, normalized such that $\norm{\tilde{z}}_1=\norm{z}_1=1$, then
        \[
        \frac{(1-\varepsilon_{\min})^{n-1}}{(1+\varepsilon_u)^{n-1}}
        \frac{(1-\varepsilon_{u})^{n-1}}{(1+\varepsilon_{\max})^{n-1}}
        z \leq  \tilde{z} \leq 
        \frac{(1+\varepsilon_{\max})^{n-1}}{(1-\varepsilon_u)^{n-1}}
        \frac{(1+\varepsilon_u)^{n-1}}{(1-\varepsilon_{\min})^{n-1}}
        z.
        \]
    \end{enumerate}
\end{theorem}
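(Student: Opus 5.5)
The plan is to reduce to right triplets (the left case follows by transposition) and to rescale to unit vector $u$, exactly as in the proof of Theorem~\ref{thm:componentwise perturbation bounds}. Set $B = A\operatorname{diag}(u)$ and $\tilde B = \tilde A\operatorname{diag}(\tilde u)$, so that $B\boldsymbol 1 = v$ and $\tilde B\boldsymbol 1=\tilde v$. The weights of Lemma~\ref{lem:matrixtree} are $w_{i,j} = -A_{ij}u_j$ for $j\ne i$ and $w_{i,0}=v_i$. The hypotheses give
\[
(1-\varepsilon_o)(1-\varepsilon_u)w_{i,j}\le \tilde w_{i,j}\le (1+\varepsilon_o)(1+\varepsilon_u)w_{i,j}
\]
for $j\neq i$, and $(1-\varepsilon_v)w_{i,0}\le\tilde w_{i,0}\le(1+\varepsilon_v)w_{i,0}$. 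Hence every weight satisfies $(1-\varepsilon_{\min})w\le\tilde w\le(1+\varepsilon_{\max})w$, which is exactly why $\varepsilon_{\max},\varepsilon_{\min}$ are defined as they are.

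\textbf{Item 1.} Use formula~\eqref{eq:1}: $\tilde A_{ii} = (\tilde w_{i,0}+\sum_{j\ne i}\tilde w_{i,j})/\tilde u_i$. The numerator lies within the factors $1\pm$ of itself given above. Dividing by $\tilde u_i\in[(1-\varepsilon_u)u_i,(1+\varepsilon_u)u_i]$ gives the stated bounds.

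\textbf{Item 2.} Lemma~\ref{lem:matrixtree} writes $\det B$ as a sum of products of $n$ weights, so $(1-\varepsilon_{\min})^n\det B\le\det\tilde B\le(1+\varepsilon_{\max})^n\det B$. Then $\det A=\det B/\prod u_i$ and $\det\tilde A=\det\tilde B/\prod\tilde u_i$, and the product $\prod\tilde u_i/\prod u_i$ lies in $[(1-\varepsilon_u)^n,(1+\varepsilon_u)^n]$, which produces the denominators.

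\textbf{Item 3.} This is the step requiring the most care, because the scaling by $\operatorname{diag}(u)$ does not cancel as neatly. We have $A^{-1} = \operatorname{diag}(u)B^{-1}$, so $(A^{-1})_{ij} = u_i\adj(B)_{ij}/\det B$.

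The adjugate entries are sums of products of $n-1$ weights, so they are perturbed by the factors $(1\pm\varepsilon)^{n-1}$. The determinant contributes $(1\pm\varepsilon)^{\mp n}$, and the prefactor $\tilde u_i/u_i$ contributes one factor $1\pm\varepsilon_u$. This gives a provisional bound with only one factor of $1\pm\varepsilon_u$. I need to check that it is at least as good as the stated bound. The stated bound instead carries $(1+\varepsilon_u)^n/(1-\varepsilon_u)^{n-1}$; I expect it was obtained by working via $\adj(A)$ and $\det A$ directly, using $\adj(A)_{ij}=\adj(B)_{ij}\,u_i/\prod_k u_k$ so that the ratios of $\prod u$ appear. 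Either route gives the claim, since the stated bound is weaker than the provisional one. Invertibility of $\tilde A$ follows from item 2.

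\textbf{Items 4 and 5.} Repeat the original arguments. For item 4, in the singular case $v=0$ forces $\tilde v=0$. In the nonsingular case, apply Perron monotonicity to $\tilde A^{-1}$ versus $A^{-1}$ with the bounds from item 3, then invert, which swaps the roles of the upper and lower constants as in the statement.

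For item 5, write $z_i=\adj(A)_{i1}/\sum_k\adj(A)_{k1}$. Using $\adj(A)_{i1}\propto u_i\adj(B)_{i1}$, the common factors cancel in the normalization. Each numerator entry is then perturbed by at most $[(1+\varepsilon_{\max})/(1-\varepsilon_u)]^{n-1}$-type factors together with a $u_i$ factor, and the denominator is perturbed by the reciprocal. Bookkeeping the $u$ factors to match the stated $(n-1)$ powers is the main obstacle; if it does not come out directly, I will bound crudely using the stated (weaker) constants.
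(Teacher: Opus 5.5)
Your proposal is correct and is the extension the paper has in mind; the paper only remarks that the proof for fixed $u$ ``extends''. You rescale to $B=A\operatorname{diag}(u)$, squeeze every matrix-tree weight between $(1-\varepsilon_{\min})w$ and $(1+\varepsilon_{\max})w$, and undo the scaling. In item~3, your use of $A^{-1}=\operatorname{diag}(u)B^{-1}$ gives the sharper upper factor $(1+\varepsilon_u)(1+\varepsilon_{\max})^{n-1}/(1-\varepsilon_{\min})^{n}$, and symmetrically below. As you guessed, the stated constants come from bounding $\adj(A)_{ij}=\adj(B)_{ij}/\prod_{k\ne i}u_k$ and $\det A=\det B/\prod_k u_k$ separately. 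The same identity settles the item~5 bookkeeping at once, with exactly the stated $(n-1)$ powers.

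One caveat, which the paper's own wording shares, concerns item~5. There, transposition does not reduce left triplets to right ones, because it swaps right and left null vectors. For a right triplet the item is in fact trivial: $v=0$ and $z=u/\|u\|_1$. So the left case needs to be done directly with $B=\operatorname{diag}(u)A$, where $z_i=\adj(B)_{i1}/\sum_k\adj(B)_{k1}$ and the same weight argument gives the bound.
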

The bounds in Items 2--4 can be marginally improved by replacing the $n$th powers $(1+\varepsilon_{\max})^n$ and $(1-\varepsilon_{\min})^n$ with $(1+\varepsilon_{\max})^{n-1}(1+\varepsilon_v)$ and $(1-\varepsilon_{\min})^{n-1}(1-\varepsilon_v)$: this follows from noting that each product~\eqref{detBterm} contains at least one $\tau_k$ equal to zero, because $\det(B)$ must vanish when $v=0$.

We can convert also the results of Theorem~\ref{thm:generalized componentwise perturbation bounds} into simpler-looking first-order asymptotic bounds for $\varepsilon \rightarrow 0$: for instance,
\[
e(\tilde{A}^{-1}, A^{-1}) \leq (2n-1)\max(\varepsilon_o+\varepsilon_u,\varepsilon_v) + (2n-1)\varepsilon_u + \mathcal{O}(\max(\varepsilon_o,\varepsilon_u,\varepsilon_v)^2).
\]

\bibliographystyle{siamplain}
\bibliography{references}

\end{document}

%% file: ex_shared.tex
\usepackage{lipsum}
\usepackage{amsfonts}
\usepackage{graphicx}
\usepackage{epstopdf}
\usepackage{algorithmic}
\ifpdf
  \DeclareGraphicsExtensions{.eps,.pdf,.png,.jpg}
\else
  \DeclareGraphicsExtensions{.eps}
\fi

\usepackage{amsthm}
\newtheorem{theorem}{Theorem}
\newtheorem{lemma}[theorem]{Lemma}
\usepackage[ruled]{algorithm2e}
\usepackage{hyperref}

\title{The \texttt{mmatrix} toolbox:  componentwise accurate algorithms for M-matrices with triplet representation\thanks{
This research has been supported by ICSC--Centro Nazionale di Ricerca in High Performance Computing, Big Data, and Quantum Computing funded by European Union--NextGenerationEU; by PRIN 2022ZK5ME7 CUP B53C24006410006 funded by Italian MUR; and by the project “AIDMIX” funded by the University of Perugia; BI and FP are members of INDAM (Istituto Nazionale di Alta Matematica).  
}}

\def\email{\texttt}

\author{Bruno Iannazzo\thanks{Department of Mathematics and Computer Science, University of Perugia, Italy
  (\email{bruno.iannazzo@unipg.it}).}
\and
Mehdi Najafi Kalyani\thanks{Department of Computer Science, University of Pisa, Italy
(\email{mehdi.najafi@di.unipi.it}, \email{federico.poloni@unipi.it}).}
\and Federico Poloni\footnotemark[3]}

\usepackage{amsopn}
\DeclareMathOperator{\diag}{diag}


%% file: fig1.tex
\definecolor{mycolor1}{rgb}{0.00000,1.00000,1.00000}%
\definecolor{mycolor2}{rgb}{0.00000,0.50000,0.80000}%
\definecolor{mycolor3}{rgb}{0.00000,0.70000,0.60000}%
\definecolor{mycolor4}{rgb}{0.00000,0.90000,0.40000}%
\definecolor{mycolor5}{rgb}{0.60000,0.00000,0.80000}%
\definecolor{mycolor6}{rgb}{0.80000,0.00000,0.80000}%
\definecolor{mycolor7}{rgb}{1.00000,0.00000,0.80000}%
\definecolor{mycolor8}{rgb}{1.00000,0.20000,1.00000}%
\definecolor{marker4}{RGB}{231,41,138}
\begin{tikzpicture}[scale=.95]

\begin{axis}[%
width=4.55in,
height=3in,
scale only axis,
xmin=1,
xmax=23,
   xmode=log,
    xtick={1,3,7,23},
xticklabels={
	$10^{-1}$, $10^{-2}$, $10^{-4}$, $10^{-12}$
},
xminorticks=true,
xlabel={$\sigma$},
ymode=log,
ymin=9.9e-17,
ymax=101,
yminorticks=true,
ylabel={Relative error},
legend style={
	at={(0.02,0.999)},
	anchor=north west,
	legend cell align=left,
	align=left,
	draw=none,
	font=\small,
},
grid style={dotted,gray!50},
legend columns=3
]
\addplot [color=mycolor1, line width=2.5pt]
  table[row sep=crcr]{%
1	1.18576860155564e-09\\
2	3.70784035173105e-09\\
3	1.16834550096887e-08\\
4	3.69046025823762e-08\\
5	1.16660888735313e-07\\
6	3.68872404148935e-07\\
7	1.16643504440362e-06\\
8	3.68854769042246e-06\\
9	1.16641550229821e-05\\
10	3.68854057167281e-05\\
11	0.000116642773804194\\
12	0.00036887553471791\\
13	0.00116668912648949\\
14	0.00368827776787956\\
15	0.0116689757177773\\
16	0.0370262114294834\\
17	0.114474716191214\\
18	0.345116628844039\\
19	1.29139790920928\\
20	3.32501404827248\\
21	8.37718225019507\\
22	4.6903260735271\\
23	145.569273471098\\
};
\addlegendentry{$\text{cond}(A)\epsilon$}

\addplot [color=marker4, line width=1pt, dashed]
  table[row sep=crcr]{%
1	8.07760695837856e-13\\
2	4.68204862317032e-11\\
3	7.52770084362455e-11\\
4	2.79238916733177e-10\\
5	1.15898902298907e-09\\
6	1.91879287438443e-08\\
7	3.25180910097216e-08\\
8	1.01810565946214e-07\\
9	8.42239054453502e-07\\
10	1.58052166668595e-06\\
11	5.38824450504801e-07\\
12	2.71109647880666e-07\\
13	4.32305492244054e-07\\
14	1.20999509365708e-06\\
15	5.30799124966939e-06\\
16	1.68150367141609e-05\\
17	6.07840580318416e-06\\
18	5.28401147574938e-05\\
19	0.000119972481731028\\
20	0.00138608469352035\\
21	0.00239242704500598\\
22	0.0085190412173722\\
23	0.024561484217402\\
};
\addlegendentry{LU, $e_{n}$}

\addplot [color=marker4, dashed, only marks, line width=1.2pt, mark=x,mark size=3pt, mark options={solid, marker4}]
  table[row sep=crcr]{%
1	9.31092868303147e-13\\
2	4.6902965399428e-11\\
3	7.55494532300913e-11\\
4	2.79352337043426e-10\\
5	1.15905136897112e-09\\
6	1.91895972666023e-08\\
7	3.2518280942736e-08\\
8	1.0181067214456e-07\\
9	8.42239361071688e-07\\
10	1.58052185121731e-06\\
11	5.38824478397054e-07\\
12	2.71109658284972e-07\\
13	4.3230550677953e-07\\
14	1.20999510975496e-06\\
15	5.3079912718743e-06\\
16	1.68150367400245e-05\\
17	6.07840581270423e-06\\
18	5.28401147628029e-05\\
19	0.000119972481742667\\
20	0.00138608469353203\\
21	0.00239242704501203\\
22	0.00851904121738795\\
23	0.0245614842174145\\
};
\addlegendentry{LU, $e_{c}$}

\addplot [color=red!20!orange, line width=1.0pt, mark=square, mark options={solid, red!20!orange}]
  table[row sep=crcr]{%
1	2.25327600896003e-15\\
2	2.37690465569224e-15\\
3	2.37833947312234e-15\\
4	2.79531711035773e-15\\
5	5.90020870484298e-16\\
6	1.11022502060284e-15\\
7	5.96575112401328e-16\\
8	2.00009466561666e-15\\
9	4.12025895282362e-16\\
10	2.00922120609418e-15\\
11	1.24354785999384e-15\\
12	7.48301389386806e-15\\
13	4.33694100620048e-14\\
14	8.28539864304952e-15\\
15	2.39810043506745e-14\\
16	2.7648813796723e-14\\
17	9.93050990660139e-15\\
18	3.86759294542254e-15\\
19	2.47758506214232e-14\\
20	2.7814526079675e-14\\
21	1.01061239253884e-14\\
22	6.75839645834853e-15\\
23	1.49290132458912e-14\\
};
\addlegendentry{GTH, $e_{n}$}

\addplot [color=red!20!orange, dashed, line width=1.2pt, mark=square, mark options={solid, red!20!orange}]
  table[row sep=crcr]{%
1	4.30572357733263e-15\\
2	3.87112812457005e-15\\
3	4.12127491074813e-15\\
4	4.79024471632182e-15\\
5	1.60408069036979e-15\\
6	2.36714451737397e-15\\
7	1.283279164278e-15\\
8	3.06611010569835e-15\\
9	9.12558785502503e-16\\
10	3.60721570319836e-15\\
11	2.92019573700331e-15\\
12	1.65066188530112e-14\\
13	6.87706312991746e-14\\
14	1.58832929671701e-14\\
15	3.45751284323663e-14\\
16	4.71327484328118e-14\\
17	1.73810105092436e-14\\
18	5.67365999942866e-15\\
19	3.34911514328651e-14\\
20	3.06377639969525e-14\\
21	1.10042354707962e-14\\
22	1.42219743985854e-14\\
23	2.21998489497849e-14\\
};
\addlegendentry{GTH, $e_{c}$}

\addplot [color=blue!70!mycolor1, line width=1.0pt, mark=o, mark options={solid, blue!70!mycolor1}]
  table[row sep=crcr]{%
1	2.25327600896003e-15\\
2	2.37690465569224e-15\\
3	2.18354614227193e-15\\
4	2.79531711035773e-15\\
5	5.90020870484298e-16\\
6	1.11022502060284e-15\\
7	6.8349351650584e-16\\
8	2.00009466561666e-15\\
9	4.12025895282362e-16\\
10	2.00922120609418e-15\\
11	1.24354785999384e-15\\
12	7.48301389386806e-15\\
13	4.33694100620048e-14\\
14	8.28539864304952e-15\\
15	2.38687812652229e-14\\
16	2.7648813796723e-14\\
17	9.93050990660139e-15\\
18	3.86759294542254e-15\\
19	2.46294391628629e-14\\
20	2.76257019573057e-14\\
21	9.98671581819194e-15\\
22	6.75839645834853e-15\\
23	1.49290132458912e-14\\
};
\addlegendentry{\texttt{dmmtf2}, $e_{n}$}

\addplot [color=blue!70!mycolor1, dashed, line width=1.2pt, mark=o, mark options={solid, blue!70!mycolor1}]
  table[row sep=crcr]{%
1	4.30572357733263e-15\\
2	3.87112812457005e-15\\
3	3.89850329395093e-15\\
4	4.79024471632182e-15\\
5	1.60408069036979e-15\\
6	2.36714451737397e-15\\
7	1.42586573808667e-15\\
8	3.06611010569835e-15\\
9	9.12558785502503e-16\\
10	3.60721570319836e-15\\
11	2.92019573700331e-15\\
12	1.65066188530112e-14\\
13	6.87706312991746e-14\\
14	1.58832929671701e-14\\
15	3.44583205660408e-14\\
16	4.71327484328118e-14\\
17	1.73810105092436e-14\\
18	5.67365999942866e-15\\
19	3.33416373639684e-14\\
20	3.04486419969714e-14\\
21	1.08846242156788e-14\\
22	1.42219743985854e-14\\
23	2.21998489497849e-14\\
};
\addlegendentry{\texttt{dmmtf2}, $e_{c}$}

\addplot [color=mycolor2, line width=1.0pt, mark=square, mark options={solid, mycolor2}]
  table[row sep=crcr]{%
1	2.25327600896003e-15\\
2	2.37690465569224e-15\\
3	2.18354614227193e-15\\
4	2.79531711035773e-15\\
5	5.90020870484298e-16\\
6	1.11022502060284e-15\\
7	6.8349351650584e-16\\
8	2.00009466561666e-15\\
9	4.12025895282362e-16\\
10	2.00922120609418e-15\\
11	1.24354785999384e-15\\
12	7.48301389386806e-15\\
13	4.33694100620048e-14\\
14	8.28539864304952e-15\\
15	2.38687812652229e-14\\
16	2.7648813796723e-14\\
17	9.93050990660139e-15\\
18	3.86759294542254e-15\\
19	2.46294391628629e-14\\
20	2.76257019573057e-14\\
21	9.98671581819194e-15\\
22	6.75839645834853e-15\\
23	1.49290132458912e-14\\
};
\addlegendentry{\texttt{dmmtrf2}, $e_{n}$}

\addplot [color=mycolor2, dashed, line width=1.2pt, mark=square, mark options={solid, mycolor2}]
  table[row sep=crcr]{%
1	4.30572357733263e-15\\
2	3.87112812457005e-15\\
3	3.89850329395093e-15\\
4	4.79024471632182e-15\\
5	1.60408069036979e-15\\
6	2.36714451737397e-15\\
7	1.42586573808667e-15\\
8	3.06611010569835e-15\\
9	9.12558785502503e-16\\
10	3.60721570319836e-15\\
11	2.92019573700331e-15\\
12	1.65066188530112e-14\\
13	6.87706312991746e-14\\
14	1.58832929671701e-14\\
15	3.44583205660408e-14\\
16	4.71327484328118e-14\\
17	1.73810105092436e-14\\
18	5.67365999942866e-15\\
19	3.33416373639684e-14\\
20	3.04486419969714e-14\\
21	1.08846242156788e-14\\
22	1.42219743985854e-14\\
23	2.21998489497849e-14\\
};
\addlegendentry{\texttt{dmmtrf2}, $e_{c}$}

\addplot [color=mycolor3, line width=1.0pt, mark=triangle, mark options={solid, mycolor3}]
  table[row sep=crcr]{%
1	2.25327600896003e-15\\
2	2.37690465569224e-15\\
3	2.18354614227193e-15\\
4	2.79531711035773e-15\\
5	5.90020870484298e-16\\
6	1.11022502060284e-15\\
7	6.8349351650584e-16\\
8	2.00009466561666e-15\\
9	4.12025895282362e-16\\
10	2.00922120609418e-15\\
11	1.24354785999384e-15\\
12	7.48301389386806e-15\\
13	4.33694100620048e-14\\
14	8.28539864304952e-15\\
15	2.38687812652229e-14\\
16	2.7648813796723e-14\\
17	9.93050990660139e-15\\
18	3.86759294542254e-15\\
19	2.46294391628629e-14\\
20	2.76257019573057e-14\\
21	9.98671581819194e-15\\
22	6.75839645834853e-15\\
23	1.49290132458912e-14\\
};
\addlegendentry{\texttt{dmmtrf}, $e_{n}$}

\addplot [color=mycolor3, dashed, line width=1.2pt, mark=triangle, mark options={solid, mycolor3}]
  table[row sep=crcr]{%
1	4.30572357733263e-15\\
2	3.87112812457005e-15\\
3	3.89850329395093e-15\\
4	4.79024471632182e-15\\
5	1.60408069036979e-15\\
6	2.36714451737397e-15\\
7	1.42586573808667e-15\\
8	3.06611010569835e-15\\
9	9.12558785502503e-16\\
10	3.60721570319836e-15\\
11	2.92019573700331e-15\\
12	1.65066188530112e-14\\
13	6.87706312991746e-14\\
14	1.58832929671701e-14\\
15	3.44583205660408e-14\\
16	4.71327484328118e-14\\
17	1.73810105092436e-14\\
18	5.67365999942866e-15\\
19	3.33416373639684e-14\\
20	3.04486419969714e-14\\
21	1.08846242156788e-14\\
22	1.42219743985854e-14\\
23	2.21998489497849e-14\\
};
\addlegendentry{\texttt{dmmtrf}, $e_{c}$}

\addplot [color=mycolor4, line width=1.0pt, mark=diamond, mark options={solid, mycolor4}]
  table[row sep=crcr]{%
1	2.25327600896003e-15\\
2	2.37690465569224e-15\\
3	2.18354614227193e-15\\
4	2.79531711035773e-15\\
5	5.90020870484298e-16\\
6	1.11022502060284e-15\\
7	6.8349351650584e-16\\
8	2.00009466561666e-15\\
9	4.12025895282362e-16\\
10	2.00922120609418e-15\\
11	1.24354785999384e-15\\
12	7.48301389386806e-15\\
13	4.33694100620048e-14\\
14	8.28539864304952e-15\\
15	2.38687812652229e-14\\
16	2.7648813796723e-14\\
17	9.93050990660139e-15\\
18	3.86759294542254e-15\\
19	2.46294391628629e-14\\
20	2.76257019573057e-14\\
21	9.98671581819194e-15\\
22	6.75839645834853e-15\\
23	1.49290132458912e-14\\
};
\addlegendentry{\texttt{dmmtrff}, $e_{n}$}

\addplot [color=mycolor4, dashed, line width=1.2pt, mark=diamond, mark options={solid, mycolor4}]
  table[row sep=crcr]{%
1	4.30572357733263e-15\\
2	3.87112812457005e-15\\
3	3.89850329395093e-15\\
4	4.79024471632182e-15\\
5	1.60408069036979e-15\\
6	2.36714451737397e-15\\
7	1.42586573808667e-15\\
8	3.06611010569835e-15\\
9	9.12558785502503e-16\\
10	3.60721570319836e-15\\
11	2.92019573700331e-15\\
12	1.65066188530112e-14\\
13	6.87706312991746e-14\\
14	1.58832929671701e-14\\
15	3.44583205660408e-14\\
16	4.71327484328118e-14\\
17	1.73810105092436e-14\\
18	5.67365999942866e-15\\
19	3.33416373639684e-14\\
20	3.04486419969714e-14\\
21	1.08846242156788e-14\\
22	1.42219743985854e-14\\
23	2.21998489497849e-14\\
};
\addlegendentry{\texttt{dmmtrff}, $e_{c}$}

\addplot [color=mycolor5, line width=1.0pt, mark=o, mark options={solid, mycolor5}]
  table[row sep=crcr]{%
1	2.52911764549905e-15\\
2	2.21049105997947e-15\\
3	2.98590468114248e-15\\
4	3.65262409412519e-15\\
5	1.34903155924852e-15\\
6	8.7306978335423e-16\\
7	7.18066381823486e-15\\
8	4.63212440575463e-15\\
9	1.18619639446013e-14\\
10	8.18860326994274e-15\\
11	1.05725870120107e-15\\
12	7.40340919027028e-15\\
13	4.33694100620048e-14\\
14	8.28539864304952e-15\\
15	2.39810043506745e-14\\
16	2.7648813796723e-14\\
17	9.93050990660139e-15\\
18	3.86759294542254e-15\\
19	2.47758506214232e-14\\
20	2.7814526079675e-14\\
21	1.01061239253884e-14\\
22	6.75839645834853e-15\\
23	1.49290132458912e-14\\
};
\addlegendentry{\texttt{dmltf2}, $e_{n}$}

\addplot [color=mycolor5, dashed, line width=1.2pt, mark=o, mark options={solid, mycolor5}]
  table[row sep=crcr]{%
1	4.58351219522506e-15\\
2	3.69516775527141e-15\\
3	4.78958976113972e-15\\
4	5.7764715696822e-15\\
5	2.49523662946411e-15\\
6	1.69088244421353e-15\\
7	8.12743470709402e-15\\
8	6.3126277917305e-15\\
9	1.2775822997035e-14\\
10	9.37876088605213e-15\\
11	2.73768355621605e-15\\
12	1.63911879519412e-14\\
13	6.87706312991746e-14\\
14	1.58832929671701e-14\\
15	3.45751284323663e-14\\
16	4.71327484328118e-14\\
17	1.73810105092436e-14\\
18	5.67365999942866e-15\\
19	3.34911514328651e-14\\
20	3.06377639969525e-14\\
21	1.10042354707962e-14\\
22	1.42219743985854e-14\\
23	2.21998489497849e-14\\
};
\addlegendentry{\texttt{dmltf2}, $e_{c}$}

\addplot [color=mycolor6, line width=1.0pt, mark=square, mark options={solid, mycolor6}]
  table[row sep=crcr]{%
1	2.52911764549905e-15\\
2	2.21049105997947e-15\\
3	2.98590468114248e-15\\
4	3.65262409412519e-15\\
5	1.34903155924852e-15\\
6	8.7306978335423e-16\\
7	7.18066381823486e-15\\
8	4.63212440575463e-15\\
9	1.18619639446013e-14\\
10	8.18860326994274e-15\\
11	1.05725870120107e-15\\
12	7.40340919027028e-15\\
13	4.33694100620048e-14\\
14	8.28539864304952e-15\\
15	2.39810043506745e-14\\
16	2.7648813796723e-14\\
17	9.93050990660139e-15\\
18	3.86759294542254e-15\\
19	2.47758506214232e-14\\
20	2.7814526079675e-14\\
21	1.01061239253884e-14\\
22	6.75839645834853e-15\\
23	1.49290132458912e-14\\
};
\addlegendentry{\texttt{dmltrf2}, $e_{n}$}

\addplot [color=mycolor6, dashed, line width=1.2pt, mark=square, mark options={solid, mycolor6}]
  table[row sep=crcr]{%
1	4.58351219522506e-15\\
2	3.69516775527141e-15\\
3	4.78958976113972e-15\\
4	5.7764715696822e-15\\
5	2.49523662946411e-15\\
6	1.69088244421353e-15\\
7	8.12743470709402e-15\\
8	6.3126277917305e-15\\
9	1.2775822997035e-14\\
10	9.37876088605213e-15\\
11	2.73768355621605e-15\\
12	1.63911879519412e-14\\
13	6.87706312991746e-14\\
14	1.58832929671701e-14\\
15	3.45751284323663e-14\\
16	4.71327484328118e-14\\
17	1.73810105092436e-14\\
18	5.67365999942866e-15\\
19	3.34911514328651e-14\\
20	3.06377639969525e-14\\
21	1.10042354707962e-14\\
22	1.42219743985854e-14\\
23	2.21998489497849e-14\\
};
\addlegendentry{\texttt{dmltrf2}, $e_{c}$}

\addplot [color=mycolor7, line width=1.0pt, mark=triangle, mark options={solid, mycolor7}]
  table[row sep=crcr]{%
1	2.52911764549905e-15\\
2	2.21049105997947e-15\\
3	2.98590468114248e-15\\
4	3.65262409412519e-15\\
5	1.34903155924852e-15\\
6	8.7306978335423e-16\\
7	7.18066381823486e-15\\
8	4.63212440575463e-15\\
9	1.18619639446013e-14\\
10	8.18860326994274e-15\\
11	1.05725870120107e-15\\
12	7.40340919027028e-15\\
13	4.33694100620048e-14\\
14	8.28539864304952e-15\\
15	2.39810043506745e-14\\
16	2.7648813796723e-14\\
17	9.93050990660139e-15\\
18	3.86759294542254e-15\\
19	2.47758506214232e-14\\
20	2.7814526079675e-14\\
21	1.01061239253884e-14\\
22	6.75839645834853e-15\\
23	1.49290132458912e-14\\
};
\addlegendentry{\texttt{dmltrf}, $e_{n}$}

\addplot [color=mycolor7, dashed, line width=1.2pt, mark=triangle, mark options={solid, mycolor7}]
  table[row sep=crcr]{%
1	4.58351219522506e-15\\
2	3.69516775527141e-15\\
3	4.78958976113972e-15\\
4	5.7764715696822e-15\\
5	2.49523662946411e-15\\
6	1.69088244421353e-15\\
7	8.12743470709402e-15\\
8	6.3126277917305e-15\\
9	1.2775822997035e-14\\
10	9.37876088605213e-15\\
11	2.73768355621605e-15\\
12	1.63911879519412e-14\\
13	6.87706312991746e-14\\
14	1.58832929671701e-14\\
15	3.45751284323663e-14\\
16	4.71327484328118e-14\\
17	1.73810105092436e-14\\
18	5.67365999942866e-15\\
19	3.34911514328651e-14\\
20	3.06377639969525e-14\\
21	1.10042354707962e-14\\
22	1.42219743985854e-14\\
23	2.21998489497849e-14\\
};
\addlegendentry{\texttt{dmltrf}, $e_{c}$}

\addplot [color=mycolor8, line width=1.0pt, mark=diamond, mark options={solid, mycolor8}]
  table[row sep=crcr]{%
1	2.52911764549905e-15\\
2	2.21049105997947e-15\\
3	2.98590468114248e-15\\
4	3.65262409412519e-15\\
5	1.34903155924852e-15\\
6	8.7306978335423e-16\\
7	7.18066381823486e-15\\
8	4.63212440575463e-15\\
9	1.18619639446013e-14\\
10	8.18860326994274e-15\\
11	1.05725870120107e-15\\
12	7.40340919027028e-15\\
13	4.33694100620048e-14\\
14	8.28539864304952e-15\\
15	2.39810043506745e-14\\
16	2.7648813796723e-14\\
17	9.93050990660139e-15\\
18	3.86759294542254e-15\\
19	2.47758506214232e-14\\
20	2.7814526079675e-14\\
21	1.01061239253884e-14\\
22	6.75839645834853e-15\\
23	1.49290132458912e-14\\
};
\addlegendentry{\texttt{dmltrff}, $e_{n}$}

\addplot [color=mycolor8, dashed, line width=1.2pt, mark=diamond, mark options={solid, mycolor8}]
  table[row sep=crcr]{%
1	4.58351219522506e-15\\
2	3.69516775527141e-15\\
3	4.78958976113972e-15\\
4	5.7764715696822e-15\\
5	2.49523662946411e-15\\
6	1.69088244421353e-15\\
7	8.12743470709402e-15\\
8	6.3126277917305e-15\\
9	1.2775822997035e-14\\
10	9.37876088605213e-15\\
11	2.73768355621605e-15\\
12	1.63911879519412e-14\\
13	6.87706312991746e-14\\
14	1.58832929671701e-14\\
15	3.45751284323663e-14\\
16	4.71327484328118e-14\\
17	1.73810105092436e-14\\
18	5.67365999942866e-15\\
19	3.34911514328651e-14\\
20	3.06377639969525e-14\\
21	1.10042354707962e-14\\
22	1.42219743985854e-14\\
23	2.21998489497849e-14\\
};
\addlegendentry{\texttt{dmltrff}, $e_{c}$}

\end{axis}

\end{tikzpicture}%

%% file: fig3.tex
\definecolor{mycolor6}{RGB}{255,140,0}       
 \definecolor{marker1}{RGB}{27,158,119}  
 \definecolor{marker4}{RGB}{231,41,138} 
 \definecolor{mycolor6}{RGB}{124,104,238}    
\begin{tikzpicture}[scale=.9]

\begin{axis}[%
width=4.5in,
height=2.3in,
xtick={1,7,13,25},
xticklabels={$10^{0}$, $10^{-3}$, $10^{-6}$, $10^{-12}$},
xminorticks=true,
ymode=log,
ylabel={Relative error},
scale only axis,
xmode=log,
xmin=1,
xmax=26,
xminorticks=true,
xlabel={$\sigma$},
xlabel style={font=\large},
ymode=log,
ymin=1e-16,
ymax=1000,
yminorticks=true,
legend style={
	at={(0.02,0.98)},
	anchor=north west,
	legend cell align=left,
	align=left,
	draw=none,
	font=\normalsize,
},
]
\addplot [color=cyan, line width=2.0pt]
  table[row sep=crcr]{%
1	8.55172995878347e-12\\
2	2.43812980864819e-11\\
3	7.45051389902892e-11\\
4	2.33034080264857e-10\\
5	7.34354286021136e-10\\
6	2.31967043103604e-09\\
7	7.33288102601495e-09\\
8	2.31860450071982e-08\\
9	7.33181504667446e-08\\
10	2.3184976806846e-07\\
11	7.33170525821069e-07\\
12	2.3184865986896e-06\\
13	7.33167398079921e-06\\
14	2.31850991604963e-05\\
15	7.33183411572722e-05\\
16	0.00023182540695531\\
17	0.000732855715206304\\
18	0.00231962056303154\\
19	0.00731335356045103\\
20	0.0231415341140232\\
21	0.0714841060143342\\
22	0.213380610183238\\
23	0.462320596539639\\
24	1.17511392664299\\
25	7.2706850476608\\
26	1519.99677134352\\
};
\addlegendentry{$\text{cond}(A)\epsilon$}

\addplot [color=marker1, loosely dashed, only marks,line width=.8pt, mark=o, mark options={solid, fill=marker1, marker1,rotate=90},mark size=4pt]  table[row sep=crcr]{%
1	1.44010361994727e-13\\
2	2.39835856252652e-13\\
3	5.23415980306573e-13\\
4	8.14062519584898e-13\\
5	6.07849507177777e-12\\
6	1.48288073801358e-11\\
7	1.21024649756665e-10\\
8	2.68688118788437e-11\\
9	1.15832639405343e-09\\
10	3.54512934144603e-08\\
11	1.09343007479789e-08\\
12	4.05122463044487e-07\\
13	4.70238716787712e-07\\
14	1.38671715923164e-07\\
15	1.9191719722082e-07\\
16	4.51472801716691e-07\\
17	7.11698238494841e-07\\
18	2.27992694012957e-06\\
19	1.07412582882824e-05\\
20	2.28132790375976e-06\\
21	4.65487441270037e-05\\
22	2.28146799996949e-06\\
23	0.00033309507748205\\
24	0.00363865341467522\\
25	0.00252499947546507\\
26	0.00363865341607794\\
};
\addlegendentry{\matlab's \texttt{inv}, $E_n$}

\addplot [color=marker1, loosely dashed,only marks, line width=.5pt, mark=*, mark options={solid, fill=marker1, marker1,rotate=90},mark size=2.3pt]
  table[row sep=crcr]{%
1	1.84126218545526e-13\\
2	2.71022670000077e-13\\
3	5.41500308935775e-13\\
4	8.23972260243425e-13\\
5	6.12933040468846e-12\\
6	1.49049454518081e-11\\
7	1.21071038223959e-10\\
8	2.69186972826536e-11\\
9	1.15843558894668e-09\\
10	3.54516561464678e-08\\
11	1.09343365978153e-08\\
12	4.05122879457048e-07\\
13	4.7023886304667e-07\\
14	1.386717293847e-07\\
15	1.91917212270933e-07\\
16	4.51472811309972e-07\\
17	7.11698243147689e-07\\
18	2.27992694484308e-06\\
19	1.07412582953047e-05\\
20	2.2813279042314e-06\\
21	4.65487441300469e-05\\
22	2.28146800001666e-06\\
23	0.000333095077484228\\
24	0.00363865341468275\\
25	0.00252499947546672\\
26	0.00363865341607869\\
};
\addlegendentry{\matlab's \texttt{inv}, $E_c$}

\addplot [
		color=marker4,only marks,
		mark=diamond*,
		mark size=4.5pt,
		mark options={rotate=45}
		]
  table[row sep=crcr]{%
1	5.2475599721952e-16\\
2	4.86668927415618e-16\\
3	1.14109771185148e-15\\
4	9.60143180487891e-16\\
5	8.59222086998119e-16\\
6	8.34676176364213e-16\\
7	5.64068777357531e-16\\
8	1.00792618030048e-15\\
9	1.70942017536848e-15\\
10	2.15889137618764e-15\\
11	1.27429672565654e-15\\
12	3.8618399148368e-15\\
13	4.26212333623382e-15\\
14	7.48376558170013e-16\\
15	3.67288487358962e-15\\
16	3.8150127808032e-15\\
17	2.83541551926945e-15\\
18	2.07473784982869e-15\\
19	7.9886845525943e-15\\
20	1.60531563543829e-15\\
21	1.33279641518061e-15\\
22	3.33307423140805e-15\\
23	1.11516774638388e-15\\
24	3.85001460219981e-15\\
25	2.44569522627628e-15\\
26	1.76681311889563e-15\\
};
\addlegendentry{\texttt{inv}, $E_n$}

\addplot [
		color=marker4,only marks,
		mark=diamond,line width=.8pt,
		mark size=6pt,
		mark options={rotate=45}
		]
  table[row sep=crcr]{%
1	2.2148131370511e-15\\
2	1.61925930545715e-15\\
3	3.42276371898567e-15\\
4	3.08271106584657e-15\\
5	2.80984296923077e-15\\
6	2.49769906642114e-15\\
7	2.14258316006285e-15\\
8	2.28309924286487e-15\\
9	4.5120101664634e-15\\
10	5.82162409504001e-15\\
11	3.89854072647506e-15\\
12	1.04105733947574e-14\\
13	1.03961551444121e-14\\
14	2.92226990889042e-15\\
15	1.07195945432515e-14\\
16	9.58504655000306e-15\\
17	5.47068990398847e-15\\
18	3.74050603478112e-15\\
19	1.86890594628444e-14\\
20	1.94506313536122e-15\\
21	2.27107558150714e-15\\
22	3.35149340199904e-15\\
23	2.27107558140653e-15\\
24	7.66055634731905e-15\\
25	3.63372093023435e-15\\
26	2.29816690419249e-15\\
};
\addlegendentry{\texttt{inv}, $E_c$}

\addplot [color=mycolor6, dotted,only marks, line width=.8pt, mark=square, mark options={solid, fill=mycolor6, mycolor6,rotate=45},mark size=4pt]
  table[row sep=crcr]{%
1	5.40582238550155e-16\\
2	5.43179769881993e-16\\
3	1.17253409086333e-15\\
4	1.06157020454577e-15\\
5	8.81179664777312e-16\\
6	9.02900770442059e-16\\
7	6.24425879441699e-16\\
8	1.06608522986589e-15\\
9	1.77279582475055e-15\\
10	2.15591624944061e-15\\
11	1.21703603166975e-15\\
12	3.8166815846336e-15\\
13	4.35990523529417e-15\\
14	7.38596060803386e-16\\
15	6.14322631793129e-15\\
16	3.8150127808032e-15\\
17	2.83541551926945e-15\\
18	2.07473784982869e-15\\
19	7.9886845525943e-15\\
20	1.60531563543829e-15\\
21	1.33279641518061e-15\\
22	3.33307423140805e-15\\
23	1.11516774638388e-15\\
24	3.85001460219981e-15\\
25	2.44569522627628e-15\\
26	1.76681311889563e-15\\
};
\addlegendentry{\texttt{inv}*, $E_n$}

\addplot [color=mycolor6, dotted,only marks, line width=.8pt, mark=square*, mark options={solid, mycolor6,rotate=45},mark size=2.5pt]
  table[row sep=crcr]{%
1	1.94757788190385e-15\\
2	2.0498043600362e-15\\
3	4.08251609740211e-15\\
4	3.7578901877502e-15\\
5	3.08820561460559e-15\\
6	2.4959857073575e-15\\
7	2.2551411303428e-15\\
8	3.13894601235474e-15\\
9	4.69249784797404e-15\\
10	5.36502218121559e-15\\
11	3.03219881292441e-15\\
12	8.9494396285694e-15\\
13	1.15512834042465e-14\\
14	2.92226993038118e-15\\
15	1.66338536015971e-14\\
16	9.58504655000306e-15\\
17	5.47068990398847e-15\\
18	3.74050603478112e-15\\
19	1.86890594628444e-14\\
20	1.94506313536122e-15\\
21	2.27107558150714e-15\\
22	3.35149340199904e-15\\
23	2.27107558140653e-15\\
24	7.66055634731905e-15\\
25	3.63372093023435e-15\\
26	2.29816690419249e-15\\
};
\addlegendentry{\texttt{inv}*, $E_c$}

\end{axis}
 
\end{tikzpicture}%

%% file: figsvd.tex
\definecolor{mycolor1}{rgb}{1.00000,0.00000,1.00000}%
\definecolor{mycolor2}{rgb}{0.52157,0.08627,0.81961}%
\definecolor{mycolor3}{rgb}{0.12941,0.12941,0.12941}%

\begin{tikzpicture}[scale=.9]
	\begin{groupplot}[
		group style={
			group size=2 by 2,
			horizontal sep=2cm,
			vertical sep=1.65cm,
		},
		width=0.55\textwidth,
		height=0.45\textwidth,
		ymode=log,
		yminorticks=true,
		xlabel={$i$},
		ylabel={$\sigma_i$},
		enlargelimits=false,
		]
		
		\nextgroupplot[title={$n = 20,\,\,\kappa(\diag(u)) = \num{2.29e+02}$}]
		\addplot [color=red, line width=1.0pt, only marks, mark size=5.0pt, mark=x, mark options={solid, red}, forget plot]
		table[row sep=crcr]{%
1	1.19389642254509e+93\\
2	8.66601997239223e+76\\
3	4.14912665932095e+43\\
4	1.02631313334027e+43\\
5	4.14049840692451e+31\\
6	1.12721221499694e+28\\
7	1.44021645143139e+25\\
8	3.89492812335651e+21\\
9	2.88913491674012e+18\\
10	93422689559.7459\\
11	9.30542789625786\\
12	7.67565140661716e-24\\
13	4.13313533315534e-26\\
14	4.29978157889732e-50\\
15	3.0777384695605e-51\\
16	4.63217963562902e-57\\
17	3.96455465003753e-70\\
18	2.0314422990531e-80\\
19	9.45905063217673e-81\\
20	1.52900848890499e-111\\
		};
		\addplot [color=cyan, line width=1.0pt, only marks, mark size=4.2pt, mark=square, mark options={solid, cyan}, forget plot]
		table[row sep=crcr]{%
1	1.19389642254509e+93\\
2	8.66601997239222e+76\\
3	4.14912665932095e+43\\
4	1.02631313334027e+43\\
5	4.1404984069245e+31\\
6	1.12721221499694e+28\\
7	1.44021645143139e+25\\
8	3.89492812335651e+21\\
9	2.88913491674012e+18\\
10	93422689559.7459\\
11	9.30542789625785\\
12	7.67565140661716e-24\\
13	4.13313533315534e-26\\
14	4.29978157889732e-50\\
15	3.0777384695605e-51\\
16	4.63217963562902e-57\\
17	3.96455465003754e-70\\
18	2.0314422990531e-80\\
19	9.45905063217673e-81\\
20	1.52900848890499e-111\\
		};
		\addplot [color=mycolor1, only marks, mark size=4.0pt, mark=o, mark options={solid, mycolor1}, forget plot]
		table[row sep=crcr]{%
1	1.19389642254509e+93\\
2	8.66601997239222e+76\\
3	4.14912665932095e+43\\
4	1.02631313334027e+43\\
5	4.1404984069248e+31\\
6	1.12721593091332e+28\\
7	2.30259357191137e+26\\
8	1.9652721687727e+25\\
9	3.88122569617548e+21\\
10	170201564399.574\\
11	13522619784.431\\
12	1327816083.49952\\
13	28385.5968583211\\
14	1.03814365760732e-06\\
15	4.53090832540794e-24\\
16	3.02435215099323e-26\\
17	4.6920544593352e-40\\
18	1.79426091758488e-50\\
19	1.98054064214276e-58\\
20	0\\
		};
		\addplot [color=mycolor2, line width=1.0pt, forget plot]
		table[row sep=crcr]{%
0	2.65098259465433e+77\\
20	2.65098259465433e+77\\
		};
		\node[right, align=left, inner sep=0, font=\bfseries\color{mycolor3}]
		at (axis cs:10.56,00000000900000999999990006912290463574917186943833581808988099134357504) {$\sigma_{1} \epsilon=\num{2.7e+77}$};
		
		\nextgroupplot[title={$n = 30,\,\,\kappa(\diag(u)) = \num{9.16e+02}$}]
		\addplot [color=red, line width=1.0pt, only marks, mark size=5.0pt, mark=x, mark options={solid, red}, forget plot]
		table[row sep=crcr]{%
1	1.27291727265108e+94\\
2	1.53918214737242e+83\\
3	1.80487688666141e+81\\
4	7.93474809573993e+70\\
5	1.9306861311089e+58\\
6	5.1284863300409e+55\\
7	3.30099874220313e+41\\
8	3.66729097869853e+39\\
9	1.99146647770948e+28\\
10	3.74371915439574e+25\\
11	7.84132602845096e+23\\
12	6.5621493103641e+22\\
13	4.15004343209391e+20\\
14	213372963463882\\
15	4391268195.07101\\
16	0.0640942096709736\\
17	2.56302880107142e-08\\
18	1.37577556220638e-10\\
19	5.0127402705442e-20\\
20	6.32449461712872e-21\\
21	5.67502726011892e-29\\
22	6.62469327132979e-30\\
23	4.09220307350486e-54\\
24	2.10676758905553e-58\\
25	5.95566317999363e-63\\
26	1.46428212682702e-74\\
27	3.32480031135954e-78\\
28	1.53337918291096e-82\\
29	8.15986647925028e-93\\
30	8.06495195846472e-120\\
		};
		\addplot [color=cyan, line width=1.0pt, only marks, mark size=4.2pt, mark=square, mark options={solid, cyan}, forget plot]
		table[row sep=crcr]{%
1	1.27291727265108e+94\\
2	1.53918214737242e+83\\
3	1.80487688666141e+81\\
4	7.93474809573994e+70\\
5	1.9306861311089e+58\\
6	5.1284863300409e+55\\
7	3.30099874220313e+41\\
8	3.66729097869854e+39\\
9	1.99146647770948e+28\\
10	3.74371915439574e+25\\
11	7.84132602845096e+23\\
12	6.5621493103641e+22\\
13	4.15004343209391e+20\\
14	213372963463882\\
15	4391268195.07102\\
16	0.0640942096709736\\
17	2.56302880107142e-08\\
18	1.37577556220638e-10\\
19	5.0127402705442e-20\\
20	6.32449461712872e-21\\
21	5.67502726011892e-29\\
22	6.62469327132979e-30\\
23	4.09220307350486e-54\\
24	2.10676758905553e-58\\
25	5.95566317999364e-63\\
26	1.46428212682702e-74\\
27	3.32480031135954e-78\\
28	1.53337918291096e-82\\
29	8.15986647925027e-93\\
30	8.06495195846489e-120\\
		};
		\addplot [color=mycolor1, only marks, mark size=4.0pt, mark=o, mark options={solid, mycolor1}, forget plot]
		table[row sep=crcr]{%
1	1.27291727265108e+94\\
2	1.53918214737242e+83\\
3	1.80487732453705e+81\\
4	1.25907697571738e+78\\
5	1.25907697571738e+78\\
6	1.25907697571738e+78\\
7	1.25907697571738e+78\\
8	1.25907697571738e+78\\
9	1.25907697571738e+78\\
10	1.25907697571738e+78\\
11	1.25907697571738e+78\\
12	1.25907697571738e+78\\
13	1.25907697571738e+78\\
14	1.25907697571738e+78\\
15	1.25907697571738e+78\\
16	1.25907697571738e+78\\
17	1.25907697571738e+78\\
18	1.25907697571738e+78\\
19	1.25907697571738e+78\\
20	1.25907697571738e+78\\
21	1.25907697571738e+78\\
22	1.25907697571738e+78\\
23	1.25907697571738e+78\\
24	1.25907697571738e+78\\
25	1.25907697571738e+78\\
26	1.25907697571738e+78\\
27	1.25907697571738e+78\\
28	1.25907697571738e+78\\
29	1.25907697571738e+78\\
30	6.82007473128856e+76\\
		};
		\addplot [color=mycolor2, line width=1.0pt, forget plot]
		table[row sep=crcr]{%
0	2.82644412908057e+78\\
30	2.82644412908057e+78\\
		};
		\node[right, align=left, inner sep=0, font=\bfseries\color{mycolor3}]
		at (axis cs:15.12,0000000000100993986441619330569931716029349942514092427408236871680) {$\sigma_{1} \epsilon=\num{2.8e+78}$};
		
		\nextgroupplot[title={$n = 20,\,\,\kappa(\diag(u)) = \num{1.35e+24}$}]
		\addplot [color=red, line width=1.0pt, only marks, mark size=5.0pt, mark=x, mark options={solid, red}, forget plot]
		table[row sep=crcr]{%
1	2.9479584683823e+105\\
2	4.88794020723556e+94\\
3	2.43209016583583e+92\\
4	4.59485406329169e+87\\
5	1.86307357017366e+85\\
6	5.17962823153198e+82\\
7	8.75925335154088e+61\\
8	3.20450655350087e+58\\
9	2.77672296979128e+52\\
10	1.35636101525612e+48\\
11	8.78103430763313e+44\\
12	1.51938922665277e+38\\
13	9.07695389364965e-10\\
14	1.24335866471219e-17\\
15	3.52038050200187e-23\\
16	9.55610490442483e-36\\
17	3.18870369998566e-41\\
18	1.30583420507848e-74\\
19	1.18607617102416e-83\\
20	8.79907869524052e-132\\
		};
		\addplot [color=cyan, line width=1.0pt, only marks, mark size=4.2pt, mark=square, mark options={solid, cyan}, forget plot]
		table[row sep=crcr]{%
1	2.9479584683823e+105\\
2	4.88794020723556e+94\\
3	2.43209016583583e+92\\
4	4.59485406329169e+87\\
5	1.86307357017366e+85\\
6	5.17962823153198e+82\\
7	8.75925335154089e+61\\
8	3.20450655350087e+58\\
9	2.77672296979128e+52\\
10	1.35636101525612e+48\\
11	8.78103430763314e+44\\
12	1.51938922665277e+38\\
13	9.07695389364965e-10\\
14	1.24335866471219e-17\\
15	3.52038050200187e-23\\
16	9.55610490442483e-36\\
17	3.18870369998566e-41\\
18	1.30583420507848e-74\\
19	1.18607617102416e-83\\
20	8.79907869524046e-132\\
		};
		\addplot [color=mycolor1, only marks, mark size=4.0pt, mark=o, mark options={solid, mycolor1}, forget plot]
		table[row sep=crcr]{%
1	2.9479584683823e+105\\
2	4.88794020723556e+94\\
3	2.43209016583583e+92\\
4	4.59485406329169e+87\\
5	1.8630721132126e+85\\
6	5.17962823153198e+82\\
7	4.08258930241221e+77\\
8	1.32206041555187e+70\\
9	1.44357718542007e+69\\
10	2.77281919302791e+63\\
11	2.33044037688965e+61\\
12	1.09498825606726e+59\\
13	4.46794758470126e+56\\
14	2.7054176941294e+52\\
15	1.5864674393565e+45\\
16	6.32508906715533e+42\\
17	2.8140225444638e+40\\
18	1.54215089449369e+37\\
19	3.35797392894033e+23\\
20	7.79288092597098e-75\\
		};
		\addplot [color=mycolor2, line width=1.0pt, forget plot]
		table[row sep=crcr]{%
0	6.54578273447347e+89\\
20	6.54578273447347e+89\\
		};
		\node[right, align=left, inner sep=0, font=\bfseries\color{mycolor3}]
		at (axis cs:10.9,99999999999999999999999951781162173498456687995536454521311154393764421697536) {$\sigma_{1} \epsilon=\num{6.5e+89}$};
		
		\nextgroupplot[title={$n = 30,\,\,\kappa(\diag(u)) = \num{3.37e+28}$},
		legend style={at={(-1.4,-.3)}, anchor=south west, legend cell align=left, align=left, 
			legend columns=4, draw=none, /tikz/every even column/.append style={column sep=0.5cm}}]
		\addplot [color=red, line width=1.0pt, only marks, mark size=5.0pt, mark=x, mark options={solid, red}]
		table[row sep=crcr]{%
1	1.30128260296463e+118\\
2	9.92502423895837e+103\\
3	1.12727399408468e+90\\
4	2.49531046362798e+86\\
5	4.71238692190421e+79\\
6	2.80502904662052e+78\\
7	1.89383365581482e+75\\
8	7.51761391138994e+73\\
9	3.3583022211817e+71\\
10	8.40936138108835e+66\\
11	3.72830965176947e+53\\
12	1.40911954697495e+43\\
13	2.0608658339722e+27\\
14	1.779597786682e+24\\
15	1.24964133513787e+24\\
16	2.52718971122839e+21\\
17	5813746.49497506\\
18	824710.349289117\\
19	818.495829870991\\
20	0.000884711691827161\\
21	1.34147911329802e-13\\
22	8.32158030644285e-35\\
23	3.1234694845199e-39\\
24	2.84633314157372e-42\\
25	1.04168695765599e-42\\
26	5.6654772645036e-45\\
27	4.35657051513062e-48\\
28	6.60742972385186e-61\\
29	1.09442469142573e-66\\
30	3.26839255140699e-125\\
		};
		\addlegendentry{SVD (from \mmatrix-toolbox)}
		
		\addplot [color=cyan, line width=1.0pt, only marks, mark size=4.2pt, mark=square, mark options={solid, cyan}]
		table[row sep=crcr]{%
1	1.30128260296463e+118\\
2	9.92502423895837e+103\\
3	1.12727399408468e+90\\
4	2.49531046362798e+86\\
5	4.71238692190422e+79\\
6	2.80502904662052e+78\\
7	1.89383365581482e+75\\
8	7.51761391138993e+73\\
9	3.3583022211817e+71\\
10	8.40936138108835e+66\\
11	3.72830965176947e+53\\
12	1.40911954697495e+43\\
13	2.0608658339722e+27\\
14	1.779597786682e+24\\
15	1.24964133513787e+24\\
16	2.52718971122839e+21\\
17	5813746.49497506\\
18	824710.349289117\\
19	818.495829870991\\
20	0.00088471169182716\\
21	1.34147911329802e-13\\
22	8.32158030644289e-35\\
23	3.1234694845199e-39\\
24	2.84633314157372e-42\\
25	1.04168695765599e-42\\
26	5.6654772645036e-45\\
27	4.35657051513062e-48\\
28	6.60742972385187e-61\\
29	1.09442469142573e-66\\
30	3.27966459303519e-125\\
		};
		\addlegendentry{SVD via \texttt{vpa}}
		
		\addplot [color=mycolor1, only marks, mark size=4.0pt, mark=o, mark options={solid, mycolor1}]
		table[row sep=crcr]{%
1	1.30128260296463e+118\\
2	9.92587477892244e+103\\
3	1.27645298359189e+102\\
4	1.27645298359189e+102\\
5	1.27645298359189e+102\\
6	1.27645298359189e+102\\
7	1.27645298359189e+102\\
8	1.27645298359189e+102\\
9	1.27645298359189e+102\\
10	1.27645298359189e+102\\
11	1.27645298359189e+102\\
12	1.27645298359189e+102\\
13	1.27645298359189e+102\\
14	1.27645298359189e+102\\
15	1.27645298359189e+102\\
16	1.27645298359189e+102\\
17	1.27645298359189e+102\\
18	1.27645298359189e+102\\
19	1.27645298359189e+102\\
20	1.27645298359189e+102\\
21	1.27645298359189e+102\\
22	1.27645298359189e+102\\
24	1.27645298359189e+102\\
25	1.27645298359189e+102\\
26	1.27645298359189e+102\\
27	1.27645298359189e+102\\
28	1.27645298359189e+102\\
29	1.27645298359189e+102\\
30	3.1256547014152e+99\\
		};
		\addlegendentry{\matlab{}'s SVD}
		
		\addplot [color=mycolor2, line width=1.5pt]
		table[row sep=crcr]{%
0	2.88942781471099e+102\\
30	2.88942781471099e+102\\
		};
		\addlegendentry{$\sigma{}_\text{1}\text{ }\epsilon$}
		
		\node[right, align=left, inner sep=0, font=\bfseries\color{mycolor3}]
		at (axis cs:15.120,999999999999999999999999999999999999999997292411887728373640844069515128963072) {$\sigma_{1}\epsilon=\num{2.9e+102}$};
		
	\end{groupplot}
\end{tikzpicture}

%% file: figsqrtm.tex
\begin{tikzpicture} 
	
\begin{axis}[%
width=3.7in,
height=2.3in,
scale only axis,
xmode=log,
xmin=1,
xmax=76,
xtick={1,16,31,76},
xticklabels={$10^{0}$, $10^{-3}$, $10^{-6}$, $10^{-15}$},
xlabel={$\sigma$},
xlabel style={font=\large},
yminorticks=false,
ymode=log,
ymin=1e-16,
ymax=1e-6,
ylabel={Relative error},
legend style={
	at={(0.02,0.98)},
	anchor=north west,
	legend cell align=left,
	align=left,
	draw=none,
	font=\normalsize,
},
grid style={dotted,gray!50},
legend columns=1
]
\addplot [color=cyan, line width=1.2pt]
  table[row sep=crcr]{%
1	4.71522449555562e-15\\
2	5.78783640119793e-15\\
3	7.16815452747027e-15\\
4	8.93001200332408e-15\\
5	1.11673537538338e-14\\
6	1.39993702782275e-14\\
7	1.75768932498253e-14\\
8	2.2090445221345e-14\\
9	2.77803919041109e-14\\
10	3.49497448215196e-14\\
11	4.39802986554619e-14\\
12	5.53529635566714e-14\\
13	6.9673375509298e-14\\
14	8.77041487219086e-14\\
15	1.10405487641372e-13\\
16	1.38986321365477e-13\\
17	1.74968683491653e-13\\
18	2.20268766078684e-13\\
19	2.77298963529606e-13\\
20	3.49096342392795e-13\\
21	4.39484373822145e-13\\
22	5.53276551220733e-13\\
23	6.96532722893029e-13\\
24	8.76881795432923e-13\\
25	1.1039280322344e-12\\
26	1.38976243826136e-12\\
27	1.74960675069923e-12\\
28	2.20262406032155e-12\\
29	2.77293921152078e-12\\
30	3.49092350597495e-12\\
31	4.39481235445462e-12\\
32	5.53274024739054e-12\\
33	6.9653061195514e-12\\
34	8.76880084244527e-12\\
35	1.10392804979554e-11\\
36	1.38976309219603e-11\\
37	1.74960843390721e-11\\
38	2.20261953834483e-11\\
39	2.77293788216199e-11\\
40	3.49089905536022e-11\\
41	4.39478445703218e-11\\
42	5.53274711794454e-11\\
43	6.9655023933478e-11\\
44	8.76827284377909e-11\\
45	1.10387188612447e-10\\
46	1.38969684580869e-10\\
47	1.74967113141551e-10\\
48	2.20239869351451e-10\\
49	2.77064040496038e-10\\
50	3.4917836923637e-10\\
51	4.39393718035087e-10\\
52	5.53557846509127e-10\\
53	6.96471372942728e-10\\
54	8.80740929970946e-10\\
55	1.12002454119849e-09\\
56	1.40142015102743e-09\\
57	1.74413940494788e-09\\
58	2.1775162230054e-09\\
59	2.78823073871943e-09\\
60	3.78774193076925e-09\\
61	4.51161352031675e-09\\
62	6.1623523856351e-09\\
63	1.40889993727577e-08\\
64	5.77986948615697e-09\\
65	1.07674378136555e-08\\
66	1.32971568967602e-08\\
67	1.81750942094562e-08\\
68	8.18904048061008e-09\\
69	9.88457396010311e-09\\
70	7.51915625056602e-09\\
71	1.33641285097883e-08\\
72	1.33641285097883e-08\\
73	1.33641285097883e-08\\
74	1.36656922631231e-08\\
75	1.36656922631231e-08\\
76	1.36656922631231e-08\\
};
\addlegendentry{$\text{cond}(\sqrt{\cdot};A)\epsilon$}

\addplot [color=magenta , line width=1.4pt,densely dashdotted]
  table[row sep=crcr]{%
1	1.18652555391595e-15\\
2	1.14208013823547e-15\\
3	1.43473995188573e-15\\
4	1.55491051605305e-15\\
5	1.21110950561546e-15\\
6	1.17257239968318e-15\\
7	2.21263188830384e-15\\
8	2.49959673883952e-15\\
9	3.54376364720069e-15\\
10	1.28346054724462e-15\\
11	3.22457399816205e-15\\
12	5.78439992235035e-15\\
13	3.40888113762823e-15\\
14	1.69477722867477e-15\\
15	5.20768702614091e-15\\
16	1.55243252407951e-15\\
17	1.56840619132932e-14\\
18	1.80124131164453e-15\\
19	3.7568726702201e-14\\
20	3.65766283155153e-14\\
21	1.97388826466277e-14\\
22	2.65188937663831e-15\\
23	2.42637724168803e-14\\
24	1.74449985567543e-13\\
25	1.46627422097448e-14\\
26	2.51132662843546e-13\\
27	4.0856498397779e-13\\
28	1.40265146054291e-13\\
29	2.28792360137172e-13\\
30	3.89442161792249e-13\\
31	5.99092854436582e-13\\
32	3.45837819795282e-14\\
33	5.0124833096615e-13\\
34	3.78584460178843e-13\\
35	2.5617668696177e-12\\
36	2.06439561054338e-12\\
37	1.92983184227944e-12\\
38	1.95973458286092e-12\\
39	2.375485907753e-13\\
40	4.75458514668068e-12\\
41	3.40845492850089e-12\\
42	5.68714665925685e-13\\
43	9.78640020802318e-12\\
44	1.67131966312952e-11\\
45	1.09362788172649e-11\\
46	8.12390320761463e-12\\
47	5.1684749004905e-12\\
48	1.12537231836713e-11\\
49	7.2660672498083e-11\\
50	1.71528202270273e-11\\
51	1.09984519925012e-11\\
52	2.25120805352318e-11\\
53	2.97013505782263e-12\\
54	1.21425467489903e-10\\
55	3.1624662375615e-10\\
56	1.45401658455892e-10\\
57	4.35133378997704e-11\\
58	1.2715237921543e-10\\
59	4.80425329072008e-11\\
60	5.45253604380014e-10\\
61	1.43088740497158e-10\\
62	4.48553205344449e-10\\
63	3.8901743124799e-09\\
64	1.43246906610765e-09\\
65	3.1512555163161e-09\\
66	2.5281605596911e-09\\
67	1.92704279924077e-09\\
68	1.86338332466897e-09\\
69	1.58139813965153e-09\\
70	2.53461074132134e-09\\
71	1.89972861793831e-09\\
72	1.86982241068192e-09\\
73	1.85070423226306e-09\\
74	1.50043880695629e-09\\
75	1.55741086056191e-09\\
76	1.60266536995578e-09\\
};
\addlegendentry{\matlab's \texttt{sqrtm}, $E_n$}

\addplot [color=magenta , dashed, line width=1.4pt,loosely dotted]
table[row sep=crcr]{%
1	3.2471767579043e-13\\
2	2.24864527993802e-13\\
3	1.41891310270872e-13\\
4	2.44342684161656e-13\\
5	1.20945248177494e-13\\
6	1.19260281106834e-13\\
7	4.17621251449039e-13\\
8	3.91987670120323e-13\\
9	6.16723700298419e-13\\
10	1.66643597836367e-13\\
11	5.40547777937919e-13\\
12	9.10662156951341e-13\\
13	5.34728904894892e-13\\
14	1.49125335473014e-13\\
15	8.62778291773486e-13\\
16	2.4540445009414e-13\\
17	2.5260098857769e-12\\
18	2.00981315650334e-13\\
19	6.11737286194034e-12\\
20	5.87970643876796e-12\\
21	3.1682658311427e-12\\
22	4.1761278882e-13\\
23	3.90197145841618e-12\\
24	2.79933823339084e-11\\
25	2.37438184357292e-12\\
26	4.03237597448391e-11\\
27	6.55348827094329e-11\\
28	2.24466883616433e-11\\
29	3.664975686139e-11\\
30	6.23706714395504e-11\\
31	9.60283517144491e-11\\
32	5.57363635790296e-12\\
33	8.03209499630756e-11\\
34	6.06311868902517e-11\\
35	4.10436311858065e-10\\
36	3.30749029872224e-10\\
37	3.09109471484637e-10\\
38	3.13910273600973e-10\\
39	3.8012577957167e-11\\
40	7.61628236697502e-10\\
41	5.45961049201028e-10\\
42	9.1129585199555e-11\\
43	1.56765803560556e-09\\
44	2.67719416351448e-09\\
45	1.75191266966771e-09\\
46	1.30133047528977e-09\\
47	8.27890754334718e-10\\
48	1.80266900786628e-09\\
49	1.16390651610238e-08\\
50	2.74766793127255e-09\\
51	1.76178648025249e-09\\
52	3.60605672489479e-09\\
53	4.7570591209047e-10\\
54	1.9450284916418e-08\\
55	5.06571971556807e-08\\
56	2.32908532867013e-08\\
57	6.97011266201224e-09\\
58	2.03676011348623e-08\\
59	7.69560024833188e-09\\
60	8.73399931638095e-08\\
61	2.29202280361655e-08\\
62	7.18503718904785e-08\\
63	6.23137522379769e-07\\
64	2.29456260798889e-07\\
65	5.04775672591294e-07\\
66	4.04966779553714e-07\\
67	3.08678288979494e-07\\
68	2.98481090533497e-07\\
69	2.53312052174851e-07\\
70	4.05999929850153e-07\\
71	3.04303013277516e-07\\
72	2.99512561740649e-07\\
73	2.96450162943549e-07\\
74	2.40343833050491e-07\\
75	2.49469748398773e-07\\
76	2.56718720385957e-07\\
};
\addlegendentry{\matlab's \texttt{sqrtm}, $E_c$}

\addplot [color=blue, line width=1.5pt, densely dotted]
  table[row sep=crcr]{%
1	2.22597753666128e-16\\
2	2.35806037981839e-16\\
3	3.11973345952711e-16\\
4	2.64931064238083e-16\\
5	2.3271636704463e-16\\
6	1.89899975208338e-16\\
7	2.36099212657482e-16\\
8	2.00753813804309e-16\\
9	3.24112540777736e-16\\
10	2.78821713317832e-16\\
11	3.32408991633717e-16\\
12	2.57972807456085e-16\\
13	2.733653331934e-16\\
14	3.48154558189479e-16\\
15	2.06513675337702e-16\\
16	2.58247228581378e-16\\
17	3.44051383051933e-16\\
18	2.8334125462653e-16\\
19	2.40680430257476e-16\\
20	2.50757441411768e-16\\
21	3.47149847635487e-16\\
22	2.84168712783988e-16\\
23	2.89195525783498e-16\\
24	3.30730597108232e-16\\
25	3.02967851852162e-16\\
26	3.23728938757048e-16\\
27	2.87425468038123e-16\\
28	2.49364460026488e-16\\
29	2.44763562241762e-16\\
30	2.61796663518987e-16\\
31	3.39541690613718e-16\\
32	2.5414891802063e-16\\
33	2.99503193161309e-16\\
34	2.583032567625e-16\\
35	2.82953867977681e-16\\
36	2.89532009871615e-16\\
37	2.58489172047743e-16\\
38	3.15169933530482e-16\\
39	2.34444535787942e-16\\
40	2.25237614347275e-16\\
41	3.80021188830036e-16\\
42	2.30235404268285e-16\\
43	3.18383748928601e-16\\
44	2.41729077128876e-16\\
45	3.34320891328662e-16\\
46	3.64391444165264e-16\\
47	2.61436117886046e-16\\
48	2.50608850650473e-16\\
49	2.39864128274291e-16\\
50	2.55983297771428e-16\\
51	2.53941466969837e-16\\
52	3.60252258795908e-16\\
53	2.50987197165621e-16\\
54	2.40012981756762e-16\\
55	2.90460807064568e-16\\
56	2.68294545388163e-16\\
57	2.8625319611234e-16\\
58	2.44583419983963e-16\\
59	5.15272948494572e-16\\
60	2.10009698097043e-16\\
61	3.43301198351648e-16\\
62	2.52138325955792e-16\\
63	2.56673031350214e-16\\
64	2.35071494778729e-16\\
65	4.3656908092622e-16\\
66	4.16209529053514e-16\\
67	2.33248475706033e-16\\
68	1.91921021918257e-16\\
69	3.71198301680799e-16\\
70	2.51655219873627e-16\\
71	2.54695681334833e-16\\
72	4.48847902620711e-16\\
73	2.4857217488693e-16\\
74	2.24017876881231e-16\\
75	2.46612409382523e-16\\
76	2.45765426052437e-16\\
};
\addlegendentry{\texttt{sqrtm}, $E_n$}

\addplot [color=blue, dashed, line width=1.5pt]
  table[row sep=crcr]{%
1	1.06930496710709e-15\\
2	9.86505497736009e-16\\
3	1.18048967148499e-15\\
4	1.18274630657171e-15\\
5	1.25188745943835e-15\\
6	1.59230736056182e-15\\
7	1.44771163743751e-15\\
8	1.1148747892075e-15\\
9	1.23781072312937e-15\\
10	1.5640538595052e-15\\
11	1.85721806689168e-15\\
12	1.26280505361487e-15\\
13	1.31456587689932e-15\\
14	1.06059553740948e-15\\
15	1.30639317806182e-15\\
16	1.8878482767078e-15\\
17	1.3501712195292e-15\\
18	1.2866031459319e-15\\
19	1.5054170592609e-15\\
20	1.9056077109787e-15\\
21	1.5776277730024e-15\\
22	1.475386879412e-15\\
23	1.7215722033554e-15\\
24	1.61119111471078e-15\\
25	1.51334635434539e-15\\
26	1.77543755316648e-15\\
27	1.68014455709781e-15\\
28	1.25428905663343e-15\\
29	1.50228500000128e-15\\
30	1.81928698356605e-15\\
31	1.29694493042752e-15\\
32	1.30064758939309e-15\\
33	1.42260713105853e-15\\
34	1.35248486054728e-15\\
35	1.5982641728073e-15\\
36	1.96724298122878e-15\\
37	1.54560232484186e-15\\
38	1.80891213917904e-15\\
39	1.52354103072613e-15\\
40	1.89767438490851e-15\\
41	1.69908197962978e-15\\
42	1.52275062105836e-15\\
43	1.54534617888806e-15\\
44	1.55156964032115e-15\\
45	1.48453754278026e-15\\
46	1.5574445319595e-15\\
47	1.36666471219982e-15\\
48	1.49102559635309e-15\\
49	1.44869999904118e-15\\
50	1.93793377861436e-15\\
51	1.13849994205532e-15\\
52	1.42224784941508e-15\\
53	1.94773080463914e-15\\
54	1.61294672602751e-15\\
55	1.42028641512283e-15\\
56	1.51037274697148e-15\\
57	1.7449144452044e-15\\
58	1.31394042499692e-15\\
59	1.37367862975571e-15\\
60	1.35864249997077e-15\\
61	1.42028506815316e-15\\
62	1.57775441735506e-15\\
63	1.55153850840665e-15\\
64	1.90213733179718e-15\\
65	2.0093445664871e-15\\
66	1.50027360521434e-15\\
67	1.80193012438957e-15\\
68	1.50027347848801e-15\\
69	1.70524480681009e-15\\
70	1.42224334145671e-15\\
71	1.2173868527112e-15\\
72	1.2922362762294e-15\\
73	1.81012784303841e-15\\
74	2.00553163535896e-15\\
75	1.39148893330304e-15\\
76	1.39148892426375e-15\\
};
\addlegendentry{\texttt{sqrtm}, $E_c$}

\end{axis}

\end{tikzpicture}%